\newcommand{\Rmnum}[1]{\expandafter\@slowromancap\romannumeral #1@}
\newtheorem{theorem}{Theorem}
\newtheorem{remark}{Remark}
\newtheorem{definition}[theorem]{Definition}
\newtheorem{lemma}[theorem]{Lemma}
\newtheorem{corollary}[theorem]{Corollary}
\newtheorem{example}[theorem]{Example}
\newcommand{\GRS}{{\mathrm{GRS}}}
\newcommand{\Hull}{{\mathrm{Hull}}}
\newcommand{\C}{{\mathcal{C}}}
\newcommand{\F}{{\mathbb{F}}}
\begin{document}

\begin{frontmatter}

\title{MDS Codes with Euclidean and Hermitian Hulls of Flexible Dimensions and Their Applications to EAQECCs}
\tnotetext[mytitlenote]{This research was supported by the National Natural Science Foundation of China (No.U21A20428 and 12171134).
}


\author[mymainaddress]{Yang Li}
\ead{ly3650920@outlook.com}

\author[mymainaddress]{Ruhao Wan}
\ead{wanruhao98@163.com}

\author[mymainaddress]{Shixin Zhu\corref{mycorrespondingauthor}}
\cortext[mycorrespondingauthor]{Corresponding author}
\ead{zhushixinmath@hfut.edu.cn}

\address[mymainaddress]{School of Mathematics, HeFei University of Technology, Hefei 230601, China}

\begin{abstract}
  The hull of a linear code is the intersection of itself with its dual code with respect to certain inner product. 
  Both Euclidean and Hermitian hulls are of theorical and practical significance. In this paper, we construct several 
  new classes of MDS codes via (extended) generalized Reed-Solomon (GRS) codes and determine their Euclidean or Hermitian hulls. 
  Specifically, four new classes of MDS codes with Hermitian hulls of flexible dimensions and six new classes of MDS 
  codes with Euclidean hulls of flexible dimensions are constructed. For the former, we further construct four new classes of 
  entanglement-assisted quantum error-correcting codes (EAQECCs) and four new classes of MDS EAQECCs of length $n>q+1$. 
  For the latter, we also give some examples on Euclidean self-orthogonal and one-dimensional Euclidean hull MDS codes. 
\end{abstract}

\begin{keyword}
hulls\sep entanglement-assisted quantum error-correcting codes\sep generalized Reed-Solomon codes\sep extended generalized Reed-Solomon codes

\MSC[2010] 94B05\sep 81p70
\end{keyword}

\end{frontmatter}


\section{Introduction}\label{sec-introduction}
Let $\C$ be a linear code over a finite filed. Denote the dual code of $\C$ with respect to certain inner product 
by $\C^\bot$, such as the usual Euclidean inner product or classical Hermitian inner product. The hull of $\C$ is 
defined by the linear code $\C\cap \C^\bot$, denoted by $\Hull(\C)$, which was first introduced by Assmus et al. 
\cite{RefJ1} to classify finite projective planes. Over the years, numerous studies have shown that the hull of 
linear codes plays a very important role in coding theory. On one hand, the hull of linear codes is closely related 
to the complexity of algorithms for computing the automorphism group of a linear code \cite[]{RefJ2} and for checking 
permutation equivalence of two linear codes \cite[]{RefJ3,RefJ3'}. In general, these algorithms are really efficient 
when the dimension of the hull is small. Some magnificent results on linear codes with small hulls were proposed in 
\cite{RefJ15,RefJ12,RefJ13,RefJ31,RefJ30,RefJ16} by using tools such as Gaussian sums, algebraic function fields, 
partial difference sets and so on.

On the other hand, the hull of linear codes has important applications in the construction of so-called entanglement-assisted 
quantum error-correcting codes (EAQECCs). EAQECCs were introduced by Burn et al. \cite{RefJ19} and were rapidly developed 
by other scholars. By all accounts, the introduction of EAQECCs is regarded as a milestone in the development of coding 
theory. Customarily, we denote $[[n,k,d;c]]_q$ as an EAQECC, which can encode $k$ logical qubits into $n$ physical  
qubits with the help of $c$ pairs of maximally entangled states over $\F_q$ and can correct up to $\lfloor \frac{d-1}{2} \rfloor$ 
qubit-errors. Different from classical quantum error-correcting codes (QECCs), EAQECCs can be constructed by any linear code, while 
QECCs can only be constructed by special linear codes with certain self-orthogonality or satisfying certain dual containing 
condition \cite{RefJ17,RefJ18}. Similar to classical linear codes, for EAQECCs, people are also willing to construct MDS EAQECCs, 
i.e., EAQECCs that reach the quantum Singleton bound \cite{RefJ32}.

However, what needs to be emphasized is the difficulty of the computation of the number of $c$. Fortunately, in 2018, 
Guenda et al. \cite{RefJ4} described some celebrated relationships between the number of $c$ and the dimension of the 
hull of a linear code, where the hull was considered under both the Euclidean inner product and Hermitian inner product. 
After this, people worked to determine dimensions of hulls of different linear codes, and constructed new EAQECCs 
and MDS EAQECCs (e.g., see \cite{Ref xin4,RefJ7,RefJ5,RefJ4,Ref xin2,RefJ21,RefJ24,RefJ6,Ref xin3,RefJ23,RefJ20,Ref xin1,RefJ22} and references therein). 
We can summarize some outstanding works on this topic as follows.  
In \cite[]{RefJ5}, Guenda et al. completely determined all possible $q$-ary MDS EAQECCs of length $n\leq q+1$ via the 
$\ell$-intersection pair of linear codes, which implies that the construction of $q$-ary MDS EAQECCs of length $n>q+1$ will 
be the main theme of our researches. In \cite{RefJ7}, (extended) generalized Reed-Solomon (GRS) codes with Euclidean and 
Hermitian hulls of arbitrary dimensions were discussed by Fang et al.. As applications, some good MDS EAQECCs with flexible 
parameters were obtained. Recently, in \cite{RefJ27}, Wang et al. constructed some MDS EAQECCs based on GRS codes with Euclidean 
hulls of flexible dimensions and these MDS EAQECCs no longer need to use a fixed $c$. In \cite[]{RefJ34}, Chen proved that if 
an $[n,k]_{q^2}$ Hermitian self-orthogonal code exists, then $[n,k]_{q^2}$ linear codes with Hermitian hulls of arbitrary 
dimensions exist. Based on this consequence, a large number of MDS EAQECCs can be directly derived.

Inspired and motivated by these works, in this paper, we study (extended) GRS codes and determine their Euclidean or Hermitian hulls. 
Moreover, using those MDS codes with Hermitian hulls of flexible dimensions, we construct four new classes of EAQECCs and four new classes 
of MDS EAQECCs with flexible parameters. The lengths of these MDS EAQECCs are all greater than $q+1$. For reference, we list the parameters 
of some known MDS EAQECCs and the new ones in Table \ref{tab:Intro1}. Besides, as concrete examples of the Euclidean case, some Euclidean 
self-orthogonal and one-dimensional Euclidean hull MDS codes are given.

The rest of this paper is organized as follows. 
In Section \ref{sec-preliminaries}, we review some basic notations and results on (extended) GRS codes and hulls.  
In Section \ref{sec-construction}, we construct several new classes of MDS codes with Euclidean or Hermitian hulls of flexible dimensions.  
Section \ref{sec-application} constructs some new families of EAQECCs and MDS EAQECCs of length $n>q+1$. 
And finally, Section \ref{sec-conclusion} concludes this paper.

\newcommand{\tabincell}[2]{\begin{tabular}{@{}#1@{}}#2\end{tabular}}
\begin{table}
\label{tab:Intro1}
\caption{Some known constructions of MDS EAQECCs of length $n>q+1$}
\begin{center}
\resizebox{\textwidth}{80mm}{
	\begin{tabular}{ccc}
		\hline
	  	Parameters & Constraints & Ref.\\
    \hline
    $[[\frac{q^2-1}{t}, \frac{q^2-1}{t}-2d+t+2, d; t]]_q$ & $q$ odd, $t$ odd, $t\geq 3$, $t\mid q+1$, $\frac{(t-1)(q-1)}{t}+2\leq d\leq \frac{(t+1)(q-1)}{t}-2$ & \cite[]{Ref xin4}\\

    $[[q^2+1,q^2+1-q-l,q+1;q-l]]_q$ & $q=p^m\geq3$, $0\leq l\leq q$ &  \cite[]{RefJ7}\\
    $[[tr^z,tr^z-k-l,k+1;k-l]]_q$ & $q=p^m\geq 3$, $r=p^e$, $e\mid m$, $1\leq t\leq r$, $1\leq z\leq 2\frac{m}{e}-1$, $1 \leq k\leq \lfloor \frac{n-1+q}{q+1} \rfloor$, $0\leq l\leq k$ & \cite[]{RefJ7}\\
    $[[tr^z+1,tr^z+1-k-l,k+1;k-l]]_q$ & $q=p^m\geq 3$, $r=p^e$, $e\mid m$, $1\leq t\leq r$, $1\leq z\leq 2\frac{m}{e}-1$, $1 \leq k\leq \lfloor \frac{n-1+q}{q+1} \rfloor$, $0\leq l\leq k-1$ &   \cite[]{RefJ7}\\
    $[[tn',tn'-k-l,k+1;k-l]]_q$ & $q=p^m\geq 3$, $n'\mid (q^2-1)$, $1\leq t\leq \frac{q-1}{n_1}$, $n_1=\frac{n'}{\gcd(n',q+1)}$, $1 \leq k\leq \lfloor \frac{n+q}{q+1} \rfloor$, $0\leq l\leq k-1$ &   \cite[]{RefJ7}\\
    $[[tn'+1,tn'+1-k-l,k+1;k-l]]_q$ & $q=p^m\geq 3$, $n'\mid (q^2-1)$, $1\leq t\leq \frac{q-1}{n_1}$, $n_1=\frac{n'}{\gcd(n',q+1)}$, $1\leq k\leq \lfloor \frac{n+q}{q+1}\rfloor$, $0\leq l\leq k$ &   \cite[]{RefJ7}\\

    $[[\frac{q^2-1}{t}, \frac{q^2-1}{t}-4qm+4m^2+3, 2m(q-1); (2m-1)^2]]_q$ & $q\geq 3$, $t\mid q^2-1$, $1\leq m\leq \lfloor \frac{q+1}{4t}$$\rfloor$ & \cite[]{Ref xin2}\\
    $[[\frac{q^2+1}{t}, \frac{q^2+1}{t}-4qm+4q+4m^2-8m+3, 2q(m-1)+2; 4(m-1)^2+1]]_q$ & $q\geq 7$, $t\mid q^2+1$, $2\leq m\leq \lfloor \frac{q+1}{4t}\rfloor$ & \cite[]{Ref xin2}\\

    $[[lh+mr,lh+mr-2d+c,d+1; c]]_q$ & \tabincell{c}{$s\mid q+1$, $t\mid q-1$, $l=\frac{q^2-1}{s}$, $m=\frac{q^2-1}{t}$, $1\leq h\leq \frac{s}{2}$,\\ $2\leq r\leq \frac{t}{2}$, $c=h-1$, $1\leq d\leq min\{\frac{s+h}{2}\cdot \frac{q+1}{s}+2, \frac{q+1}{2}+\frac{q-1}{t}-1\}$} & \cite[]{RefJ21}\\
    $[[1+(2e+1)\frac{q^2-1}{2s+1},1+(2e+1)\frac{q^2-1}{2s+1}-2k+c,k+1;c]]_q$ & $0\leq e\leq s-1$, $(2s+1)\mid q+1$, $c=2e+1$, $1\leq k\leq (s+1+e)\frac{q+1}{2s+1}-1$ & \cite[]{RefJ21}\\
    $[[1+(2e+1)\frac{q^2-1}{2s},1+(2e+1)\frac{q^2-1}{2s}-2k+c,k+1;c]]_q$ & $0\leq e\leq s-2$, $2s\mid q+1$, $c=2e+2$, $1\leq k\leq (s+1+e)\frac{q+1}{2s}-1$ & \cite[]{RefJ21}\\
    $[[1+(2e+1)\frac{q^2-1}{2s},1+(2e+1)\frac{q^2-1}{2s}-2k+c,k+1;c]]_q$ & $0\leq e\leq s-1$, $2s\mid q+1$, $c=2e+1$, $1\leq k\leq (s+e)\frac{q+1}{2s}-2$ & \cite[]{RefJ21}\\

    $[[n,n-k-h,k+1;k-h]]_q$ & $q>3$, $m>1$, $m\mid q$, $1<k\leq \lfloor \frac{n}{2}\rfloor$, $n+k>m+1$, $1\leq n\leq m$, $1\leq h\leq n-m+k-1$& \cite[]{RefJ24}\\
    $[[n,n-k-h,k+1;k-h]]_q$ & \tabincell{c}{$q>3$, $m>1$, $m\mid q$, $1<k\leq \lfloor \frac{n}{2}\rfloor$,\\ $2n-k-1<m<2n-1$, $1\leq n\leq m$, $1\leq h\leq 2n-m-1$} & \cite[]{RefJ24}\\

    $[[n, n-k-l,k+1; k-l]]_q$ & $q+1<n<2(q-1)$, $n-q<k<\lfloor \frac{n}{2}\rfloor$, $1\leq l\leq k+q-n$ & \cite[]{Ref xin3}\\

    $[[\frac{q+1}{7}(q-1),\frac{q+1}{7}(q-1)+5-2d,d;3]]_q$ & $d\leq \frac{n+2}{2}$, $\frac{5(q+1)}{7}\leq d \leq \frac{6(q+1)}{7}-2$ & \cite[]{RefJ23}\\
    $[[\frac{q+1}{7}(q-1),\frac{q+1}{7}(q-1)+7-2d,d;5]]_q$ & $d\leq \frac{n+2}{2}$, $\frac{6(q+1)}{7}\leq d \leq q$ & \cite[]{RefJ23}\\
    $[[\frac{q+1}{7}(q-1),\frac{q+1}{7}(q-1)+9-2d,d;7]]_q$ & $q$ odd, $d\leq \frac{n+2}{2}$, $d=\frac{8(q+1)}{7}-1$ & \cite[]{RefJ23}\\
    $[[\frac{q+1}{4}(q-1),\frac{q+1}{4}(q-1)+4-2d,d;2]]_q$ & $q$ odd, $d\leq \frac{n+2}{2}$, $\frac{3(q+1)}{4}\leq d \leq q$ & \cite[]{RefJ23}\\
    $[[\frac{q+1}{4}(q-1),\frac{q+1}{4}(q-1)+6-2d,d;4]]_q$ & $q$ odd, $d\leq \frac{n+2}{2}$, $q+1\leq d \leq \frac{5(q+1)}{4}-1$ & \cite[]{RefJ23}\\
    $[[\frac{q+1}{6}(q-1),\frac{q+1}{6}(q-1)+4-2d,d;2]]_q$ & $q$ odd, $d\leq \frac{n+2}{2}$, $\frac{4(q+1)}{6}\leq d \leq \frac{5(q+1)}{6}-1$ & \cite[]{RefJ23}\\
    $[[\frac{q+1}{6}(q-1),\frac{q+1}{6}(q-1)+6-2d,d;4]]_q$ & $q$ odd, $d\leq \frac{n+2}{2}$, $\frac{5q+1}{6}\leq d \leq q$ & \cite[]{RefJ23}\\

    $[[q^2+1,q^2-2\delta,2\delta+2; 2\delta+1]]_q$ & $q$ odd, $s=\frac{n}{2}$, $r\mid q-1$, $r\nmid q+1$, $0\leq \delta\leq \frac{(r-1)(s-1)}{r}$ & \cite[]{RefJ20}\\
    $[[q^2+1,q^2-2\delta-1,2\delta+3; 2\delta+2]]_q$ & $q=2^m\ (m\geq 1)$, $\mu=\frac{n-r}{2}$, $r\mid q-1$, $r\nmid q+1$, $0\leq \delta\leq \frac{\mu-1}{r}$ & \cite[]{RefJ20}\\
    $[[q^2+1,q^2-2\delta,2\delta+2; 2\delta+1]]_q$ & $q=2^m\ (m\geq 1)$, $r\mid q-1$, $r\nmid q+1$, $0\leq \delta\leq \frac{(r-1)(n-2)}{2r}$ & \cite[]{RefJ20}\\
    
    $[[q^2+1, q^2-4(m-1)(q-m-1), 2(m-1)q+2; 4(m-1)^2+1]]_q$ & $q\geq 5$, $2\leq m\leq \frac{q-1}{2}$ & \cite[]{Ref xin1}\\

    $[[n,n-k-l,k+1;k-l]]_q$ & $q=p^m\geq 3$, $1\leq k\leq q-1$, $q^2-k\leq n\leq q^2$, $0\leq l\leq n+k-q^2$ & new\\
    $[[n+1,n+1-k-l,k+1;k-l]]_q$ & $q=p^m\geq 3$, $1\leq k\leq q-1$, $q^2-k+1\leq n\leq q^2$, $0\leq l\leq n+k-q^2-1$ & new\\
    $[[n+1,n-1-q-l,q+1;q-l]]_q$ & $q=p^m\geq 3$, $q^2-q\leq n\leq q^2$, $0\leq l\leq n+q-q^2$ & new\\
    $[[m(q-1)+1,m(q-1)+1-k-l,k+1;k-l]]_q$ & $q=p^m\geq 3$, $2\leq m\leq q$, $1\leq k\leq m-1$, $0\leq l\leq k-1$ & new\\
		\hline
	\end{tabular}}
\end{center}
\end{table}

\section{Preliminaries}\label{sec-preliminaries}

Let $q$ be a prime power and $\F_q$ be the finite field with $q$ elements. For any positive integer $n$, $\F_q^n$ can be seen as 
an $n$-dimensional vector space over $\F_q$. Then a $k$-dimensional subspace of $\F_q^n$ with minimum distance $d$ is just a linear 
code $\C$, denoted by $[n,k,d]_q$. A linear code $\C$ is called an MDS code if $d=n-k+1$. Now, we review some basic notations and 
results on (extended) GRS codes and hulls.  

Firstly, for any two vectors $\mathbf{x}=(x_1,x_2,\dots,x_n)$ and $\mathbf{y}=(y_1,y_2,\dots,y_n)$ of $\F_q^n$, we can define different 
inner products between them. Specifically, the Euclidean inner product between $\mathbf{x}$ and $\mathbf{y}$ is defined by
\[\begin{split}
\langle \mathbf{x},\mathbf{y}\rangle _E=\sum_{i=1}^{n}x_iy_i.
\end{split} \]
If $\C$ is a linear code of length $n$ over $\F_q$, then the Euclidean dual code of $\C$, denoted by $\C^{\bot_E}$, 
can be described as the set  
\[\begin{split}
  \C^{\bot_E}=\{\mathbf{x}\in \F_{q}^n:\ \langle \mathbf{x},\mathbf{y}\rangle _E=0,~{\rm for~all}~\mathbf{y}\in \C\}.
\end{split}\]

The Hermitian inner product between $\mathbf{x}$ and $\mathbf{y}$ is defined by 
\[\begin{split}
    \langle \mathbf{x},\mathbf{y}\rangle_H=\sum_{i=1}^nx_iy_i^q.
\end{split}\]
If $\C$ is a linear code of length $n$ over $\F_{q^2}$, then the Hermitian dual code of $\C$, denoted by $\C^{\bot_H}$, 
can be similarly described as the set  
\[\begin{split}
\C^{\bot_H}=\{\mathbf{x}\in \F_{q^2}^n:\ \langle \mathbf{x},\mathbf{y}\rangle _H=0,~{\rm for~all}~\mathbf{y}\in \C\}.
\end{split}\]

Then as said before, we define the Euclidean hull (resp. Hermitian hull) of $\C$ as $\C\cap \C^{\bot_E}$ (resp. $\C\cap \C^{\bot_H}$), 
denoted by $\Hull_E(\C)$ (resp. $\Hull_H(\C)$). It is well known that $\C$ is an Euclidean (resp. Hermitian) self-orthogonal code if 
$\Hull_E(\C)=\C$ (resp. $\Hull_H(\C)=\C$). More generally, for a positive integer $l$, if $\dim(\Hull_E(\C))=l$ (resp. $\dim(\Hull_H(\C))=l$), 
we call $\C$ a $l$-dimensional Euclidean (resp. Hermitian) hull code. 

Denote $\F_q^*=\F_q\backslash \{0\}$. Choose $n$ distinct elements $a_1,a_2,\cdots,a_n$ from $\F_q$ and $n$ nonzero elements $v_1,v_2,\cdots,v_n$ from 
$\F^*_q$. As special MDS codes, GRS codes and extended GRS codes can be defined as follows. 
Set $\mathbf{a}=(a_1,a_2,\dots,a_n)$ and $\mathbf{v}=(v_1,v_2,\dots,v_n)$. The GRS code of length $n$ associated to $\mathbf{a}$ and $\mathbf{v}$, denoted by 
$\GRS(\mathbf{a},\mathbf{v})$, is defined by 
\[\begin{split}
  \GRS(\mathbf{a},\mathbf{v})=\{(v_1f(a_1),v_2f(a_2),\dots,v_nf(a_n)):\ f(x)\in \F_q[x]\ {\rm{and}} \deg(f(x))\leq k-1\},
\end{split} \]
where the elements $a_1,a_2,\dots,a_n$ are called the code locators of $\GRS(\mathbf{a},\mathbf{v})$ and $v_1,v_2,\dots,v_n$ are called the column 
multipliers of $\GRS(\mathbf{a},\mathbf{v})$.

With a practical technique, the extended GRS code of length $n+1$ associated to $\mathbf{a}$ and $\mathbf{v}$, denoted by $\GRS_k(\mathbf{a},\mathbf{v},\infty)$, can 
be derived. Specifically, the definition of $\GRS_k(\mathbf{a},\mathbf{v},\infty)$ is 
\[\begin{split}
  \GRS_k(\mathbf{a},\mathbf{v},\infty)=\{(v_1f(a_1),v_2f(a_2),\dots,v_nf(a_n),f_{k-1}):\ f(x)\in \F_q[x]\ {\rm{and}} \deg(f(x))\leq k-1\},
\end{split} \]
where $f_{k-1}$ is the coefficient of $x^{k-1}$ in $f(x)$. 

For our purposes, considering both (extended) GRS codes and hulls, some basic results need to be introduced. To this end, for $0\leq i\leq n$, we denote 
\begin{align}\label{eq.Intro.ui}
  u_i=\prod_{1\leq j\leq n,j\neq i}(a_i-a_j)^{-1},
\end{align}
which will appear frequently in this paper and is critical to our constructions. Then the coming results can help us calculate the dimension of 
the hull of a GRS code or an extended GRS code. 

\begin{lemma}\label{lem.Euclidean hulls}(\cite{RefJ28}) 
  Considering the Euclidean inner product over $\F_q$, the following statements hold. 
  \begin{enumerate} 
    \item [{\rm (1)}] A codeword $\boldsymbol{c}=(v_1f(a_1),v_2f(a_2),\dots,v_nf(a_n))$ of $\GRS_k(\mathbf{a},\mathbf{v})$ 
    is contained in $\GRS_k(\mathbf{a},\mathbf{v})^{\bot_E}$ if and only if there exists a polynomial $g(x)\in \F_q[x]$ 
    with $\deg(g(x))\leq n-k-1$ such that 
    $$(v_1^2f(a_1),v_2^2f(a_2),\dots,v_n^2f(a_n))=(u_1g(a_1),u_2g(a_2),\dots,u_ng(a_n)).$$

    \item [{\rm (2)}] A codeword $\boldsymbol{c}=(v_1f(a_1),v_2f(a_2),\dots,v_nf(a_n),f_{k-1})$ of $\GRS_k(\mathbf{a},\mathbf{v},\infty)$ 
    is contained in $\GRS_k(\mathbf{a},\mathbf{v},\\\infty)^{\bot_E}$ if and only if there exists a polynomial $g(x)\in \F_q[x]$ with 
    $\deg(g(x))\leq n-k$ such that 
    $$(v_1^2f(a_1),v_2^2f(a_2),\dots,v_n^2f(a_n),f_{k-1})=(u_1g(a_1),u_2g(a_2),\dots,u_ng(a_n),-g_{n-k}),$$
    where $g_{n-k}$ is the coefficient of $x^{n-k}$ in $g(x)$.
  \end{enumerate}
\end{lemma}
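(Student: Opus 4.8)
The plan is to reduce both statements to the classical description of the Euclidean dual of a (extended) GRS code as another (extended) GRS code, and then read off hull membership coordinate-by-coordinate. The one identity that drives everything is the power-sum relation
$$\sum_{i=1}^{n} u_i a_i^m = \begin{cases} 0, & 0\leq m\leq n-2,\\ 1, & m=n-1,\end{cases}$$
where $u_i=\prod_{j\neq i}(a_i-a_j)^{-1}$ as in \eqref{eq.Intro.ui}. I would establish this first via Lagrange interpolation: for any $h(x)\in\F_q[x]$ with $\deg(h)\leq n-1$, writing $h(x)=\sum_{i=1}^{n}h(a_i)\prod_{j\neq i}\frac{x-a_j}{a_i-a_j}$ shows that the coefficient of $x^{n-1}$ in $h$ equals $\sum_{i=1}^{n}u_i h(a_i)$; specializing to $h(x)=x^m$ yields the identity.

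For part (1), I would first prove the standard fact that $\GRS_k(\mathbf{a},\mathbf{v})^{\bot_E}=\GRS_{n-k}(\mathbf{a},\mathbf{u}')$, where $u_i'=u_i v_i^{-1}$. Indeed, pairing a codeword $(v_if(a_i))_i$ (with $\deg(f)\leq k-1$) against $(u_i'g(a_i))_i$ (with $\deg(g)\leq n-k-1$) gives $\sum_i u_i f(a_i)g(a_i)$; since $\deg(fg)\leq n-2$, the identity above forces this sum to vanish, and a dimension count ($k+(n-k)=n$) shows the two codes are exact Euclidean duals. Membership of $\boldsymbol{c}=(v_if(a_i))_i$ in the dual then means $\boldsymbol{c}=(u_i'g(a_i))_i$ for some such $g$, i.e. $v_if(a_i)=u_iv_i^{-1}g(a_i)$ for every $i$; multiplying through by $v_i$ gives exactly the claimed system $v_i^2f(a_i)=u_ig(a_i)$.

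For part (2), I would run the same argument for the extended code, the only new feature being the bookkeeping at the point $\infty$. I would propose that $\GRS_k(\mathbf{a},\mathbf{v},\infty)^{\bot_E}$ consists of the vectors $(u_1'g(a_1),\dots,u_n'g(a_n),-g_{n-k})$ with $\deg(g)\leq n-k$, where $g_{n-k}$ is the coefficient of $x^{n-k}$. To verify orthogonality, I would pair such a vector against a codeword $(v_1f(a_1),\dots,v_nf(a_n),f_{k-1})$: the finite part is $\sum_i u_i f(a_i)g(a_i)$, and since now $\deg(fg)\leq n-1$ the identity evaluates this to the coefficient of $x^{n-1}$ in $f(x)g(x)$. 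Because $\deg(f)\leq k-1$ and $\deg(g)\leq n-k$, the only monomials that can reach degree $n-1$ are the leading terms, so that coefficient equals $f_{k-1}g_{n-k}$; hence the finite part equals $f_{k-1}g_{n-k}$, while the coordinate $-g_{n-k}$ at $\infty$ contributes $-f_{k-1}g_{n-k}$, making the total inner product $0$. A dimension count ($n-k+1=(n+1)-k$) again shows this is the whole dual, and equating $\boldsymbol{c}$ with such a vector gives $v_i^2f(a_i)=u_ig(a_i)$ together with $f_{k-1}=-g_{n-k}$, which is the stated identity.

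The routine parts are the interpolation identity and the two dimension counts. The one place that needs genuine care, and the likely source of the $-g_{n-k}$ in the statement, is the degree bookkeeping at $\infty$ in part (2): one must see that the evaluation of $\sum_i u_i f(a_i)g(a_i)$ lands exactly on the leading coefficient $f_{k-1}g_{n-k}$, so that the sign on the $\infty$-coordinate is \emph{forced} in order to make the pairing vanish. I would double-check this step by explicitly tracking which monomials of $fg$ can attain degree $n-1$ under the bounds $\deg(f)\leq k-1$ and $\deg(g)\leq n-k$.
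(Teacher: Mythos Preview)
The paper does not give a proof of this lemma; it is quoted verbatim from \cite{RefJ28} and simply used as a black box. Your argument is correct and is precisely the standard one found in that reference: identify $\GRS_k(\mathbf{a},\mathbf{v})^{\bot_E}$ (resp.\ $\GRS_k(\mathbf{a},\mathbf{v},\infty)^{\bot_E}$) as a GRS code with column multipliers $u_iv_i^{-1}$ via the Lagrange identity $\sum_i u_i a_i^{m}=[m=n-1]$, then read off hull membership coordinate-wise. Your degree bookkeeping at the point $\infty$ in part~(2) is accurate---under $\deg(f)\leq k-1$ and $\deg(g)\leq n-k$ the only contribution to the $x^{n-1}$ coefficient of $fg$ is $f_{k-1}g_{n-k}$, which forces the sign on the last coordinate---and the injectivity/dimension count goes through since $u_i\neq 0$ and $\deg(g)\leq n-k<n$.
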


\begin{lemma}\label{lem.Hermitian hulls}(\cite{RefJ29})   
  Considering the Hermitian inner product over $\F_{q^2}$, the following statements hold. 
  \begin{enumerate} 
    \item [{\rm (1)}] A codeword $\boldsymbol{c}=(v_1f(a_1),v_2f(a_2),\dots,v_nf(a_n))$ of $\GRS_k(\mathbf{a},\mathbf{v})$ 
    is contained in $\GRS_k(\mathbf{a},\mathbf{v})^{\bot_H}$ if and only if there exists a polynomial $g(x)\in \F_{q^2}[x]$ 
    with $\deg(g(x))\leq n-k-1$ such that 
    $$(v_1^{q+1}f^q(a_1),v_2^{q+1}f^q(a_2),\dots,v_n^{q+1}f^q(a_n))=(u_1g(a_1),u_2g(a_2),\dots,u_ng(a_n)).$$
    
    \item [{\rm (2)}] A codeword $\boldsymbol{c}=(v_1f(a_1),v_2f(a_2),\dots,v_nf(a_n),f_{k-1})$ of $\GRS_k(\mathbf{a},\mathbf{v},\infty)$ 
    is contained in $\GRS_k(\mathbf{a},\mathbf{v},\\\infty)^{\bot_H}$ if and only if there exists a polynomial $g(x)\in \F_{q^2}[x]$ 
    with $\deg(g(x))\leq n-k$ such that 
    $$(v_1^{q+1}f^q(a_1),v_2^{q+1}f^q(a_2),\dots,v_n^{q+1}f^q(a_n),f_{k-1}^q)=(u_1g(a_1),u_2g(a_2),\dots,u_ng(a_n),-g_{n-k}),$$
    where $g_{n-k}$ is the coefficient of $x^{n-k}$ in $g(x)$.
  \end{enumerate}
\end{lemma}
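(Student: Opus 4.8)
The plan is to prove both parts by the same mechanism that underlies the Euclidean Lemma~\ref{lem.Euclidean hulls}, adapted to the Hermitian setting by a single application of the $q$-power Frobenius. The one fact doing all the work is the interpolation identity
\[
\sum_{i=1}^{n} u_i a_i^{\,m} = \begin{cases} 0, & 0\le m\le n-2,\\ 1, & m=n-1,\end{cases}
\]
with $u_i$ as in~\eqref{eq.Intro.ui}. I would establish it first: for $m\le n-1$ the unique polynomial of degree $\le n-1$ interpolating the values $a_i^{\,m}$ at the distinct nodes $a_1,\dots,a_n$ is $x^m$ itself, while the Lagrange form of that interpolant, $\sum_i a_i^{\,m} u_i\prod_{j\ne i}(x-a_j)$, has leading coefficient $\sum_i u_i a_i^{\,m}$; comparing the coefficient of $x^{n-1}$ on both sides yields the identity.

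For part (1), write $\bc=(v_1f(a_1),\dots,v_nf(a_n))$ with $\deg f\le k-1$. Since the vectors $(v_1a_1^{\,j},\dots,v_na_n^{\,j})$ for $0\le j\le k-1$ span $\GRS_k(\mathbf a,\mathbf v)$, membership $\bc\in\GRS_k(\mathbf a,\mathbf v)^{\bot_H}$ is equivalent to
\[
\sum_{i=1}^{n} v_i f(a_i)\,(v_i a_i^{\,j})^{q}=\sum_{i=1}^{n} v_i^{q+1} f(a_i)\,a_i^{\,jq}=0,\qquad 0\le j\le k-1.
\]
Applying Frobenius $x\mapsto x^q$ to each scalar equation and using $v_i^{q^2}=v_i$, $a_i^{q^2}=a_i$ turns this into the \emph{linear} system $\sum_i v_i^{q+1} f^q(a_i)\,a_i^{\,j}=0$ for $0\le j\le k-1$. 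Now let $g$ be the interpolating polynomial of degree $\le n-1$ determined by $u_i g(a_i)=v_i^{q+1} f^q(a_i)$ (well defined since the $a_i$ are distinct and $u_i\ne 0$), so the system reads $\sum_i u_i g(a_i) a_i^{\,j}=0$. Expanding $g=\sum_{l} g_l x^l$ and invoking the identity, the $j=0$ equation gives $g_{n-1}=0$; feeding this back, the $j=1$ equation gives $g_{n-2}=0$; inductively the $k$ equations are equivalent to $g_{n-1}=\dots=g_{n-k}=0$, i.e.\ $\deg g\le n-k-1$. This settles ``only if''; reading the chain backwards (form the inner products from such a $g$, conjugate, and apply the identity to $gh$ of degree $\le n-2$) settles ``if''.

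Part (2) is the same argument with the extra coordinate. Here $\bc=(v_1f(a_1),\dots,v_nf(a_n),f_{k-1})$ and the test vectors are $(v_ia_i^{\,j},\delta_{j,k-1})$; after conjugation the Hermitian orthogonality conditions become $\sum_i v_i^{q+1} f^q(a_i) a_i^{\,j}=0$ for $0\le j\le k-2$ together with $\sum_i v_i^{q+1} f^q(a_i) a_i^{\,k-1}=-f_{k-1}^{\,q}$. Defining $g$ of degree $\le n-1$ by $u_i g(a_i)=v_i^{q+1}f^q(a_i)$, the first $k-1$ equations force $g_{n-1}=\dots=g_{n-k+1}=0$, so $\deg g\le n-k$, and the last equation then reads $g_{n-k}=-f_{k-1}^{\,q}$, which is exactly the claimed relation $f_{k-1}^{\,q}=-g_{n-k}$.

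The main obstacle---indeed essentially the only delicate point---is the bookkeeping of conjugations: the Hermitian form is sesquilinear, so the raw orthogonality equations are \emph{semilinear} in the coefficients of the test polynomial, and one must apply Frobenius (using $x^{q^2}=x$ on $\F_{q^2}$) at the right moment to linearize them before the interpolation/degree argument applies. Equivalently, one could phrase the whole thing as $\C^{\bot_H}=(\overline{\C})^{\bot_E}$ with $\overline{\C}=\GRS_k(\mathbf a^{\,q},\mathbf v^{\,q})$ and reduce to Lemma~\ref{lem.Euclidean hulls}, but then one must track how $u_i$ transforms under $a_i\mapsto a_i^{\,q}$, namely $\prod_{j\ne i}(a_i^q-a_j^q)^{-1}=u_i^{\,q}$; either way the care lies entirely in the Frobenius accounting, and no idea beyond the interpolation identity is needed.
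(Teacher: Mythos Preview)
Your argument is correct. The paper does not prove this lemma at all; it simply cites it from \cite{RefJ29}, so there is no ``paper's own proof'' to compare against. Your route---establishing the Newton/Lagrange identity $\sum_i u_i a_i^{m}=[m=n-1]$, writing the Hermitian orthogonality conditions against the monomial basis, applying the Frobenius $x\mapsto x^q$ (using $a_i^{q^2}=a_i$ and $v_i^{q(q+1)}=v_i^{q+1}$) to linearize them, and then reading off the vanishing of the top coefficients of the interpolant $g$---is exactly the standard proof, and your bookkeeping in both parts is accurate. The alternative reduction you sketch at the end, namely $\C^{\bot_H}=(\overline{\C})^{\bot_E}$ together with $\overline{u_i}=u_i^{\,q}$, is also valid and is the form in which the result is often derived in the literature.
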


For a given $l$-dimensional Hermitian hull linear code, by Corollary $2.2$ of \cite[]{RefJ34}, linear codes with Hermitian hulls of 
flexible dimensions can be obtained in an explicit way. For convenience, we equivalently write it in the following form.
\begin{lemma}\label{lem.Hermitian hulls from l-dim Hermitian hull codes}(\cite[]{RefJ34}) 
  Let $\C$ be an $[n, k]_{q^2}$ linear code with $l$-dimensional Hermitian hull. Then there exists an $[n,k]_{q^2}$ linear code with 
  $l'$-dimensional Hermitian hull for nonnegative integer $l'$ satisfying $0\leq l' \leq l$.
\end{lemma}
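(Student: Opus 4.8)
The plan is to reduce the whole statement to a rank computation for a Gram matrix and then show that a single rank-raising perturbation of the generator matrix always exists, so that the hull dimension can be lowered one unit at a time. Concretely, if $G$ is a $k\times n$ generator matrix of $\C$ and $\overline{G}$ denotes the entrywise $q$-th power of $G$, then the Gram matrix $M=G\overline{G}^{\,T}$ has $(i,j)$-entry $\langle g_i,g_j\rangle_H$, and a routine computation with $v=xG$ shows $\dim\bigl(\Hull_H(\C)\bigr)=k-\mathrm{rank}(M)$. Thus it suffices to prove the one-step claim: whenever $l=k-\mathrm{rank}(M)\ge 1$, there is an $[n,k]_{q^2}$ code whose Gram matrix has rank $\mathrm{rank}(M)+1$, i.e.\ Hermitian hull of dimension $l-1$. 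Iterating this claim, and noting that $\C$ itself realizes $l'=l$, then yields every value $0\le l'\le l$.

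First I would put $G$ in a basis adapted to the hull. Since $\Hull_H(\C)$ is exactly the radical of the Hermitian form restricted to $\C$, I can choose $g_1$ in the hull, so that the first row and first column of $M$ vanish and $M=\mathrm{diag}(0_l,B)$ with $B$ an invertible $(k-l)\times(k-l)$ block. Next I would perturb only the first row, setting $G'=G+e_1 r$ for a row vector $r\in\F_{q^2}^n$, where $e_1=(1,0,\dots,0)^T$. Expanding $M'=G'\overline{G'}^{\,T}$ and writing $p=G\overline{r}^{\,T}$ and $\rho=\langle r,r\rangle_H\in\F_q$, the only affected entries are the first row, the first column, and the corner, with new $(1,1)$-entry $\mathrm{Tr}(p_1)+\rho$. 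Because $G$ has rank $k$, the map $r\mapsto p$ is surjective, so I fix $r_1$ with $p=e_1$ and then take $r=c\,r_1$ for a scalar $c\in\F_{q^2}^*$; this keeps $p$ proportional to $e_1$, so the off-diagonal modifications vanish and $M'=\mathrm{diag}\bigl(\mathrm{Tr}(c)+N(c)\rho_1,\,0_{l-1},\,B\bigr)$, where $\rho_1=\langle r_1,r_1\rangle_H$.

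Everything then hinges on choosing $c$ so that the scalar $\mathrm{Tr}(c)+N(c)\rho_1$ is nonzero, which raises the rank by exactly one; this non-degeneracy is the hard part. If $\rho_1\neq-\mathrm{Tr}(1)$ the choice $c=1$ already works, so I would treat the case $\rho_1=-\mathrm{Tr}(1)$ by running over the $q+1$ norm-one elements $c$, for which $\mathrm{Tr}(c)+N(c)\rho_1=\mathrm{Tr}(c)-\mathrm{Tr}(1)=(c+c^{-1})-2$; since $q+1\ge 3$ there is such a $c\neq 1$, and then $(c-1)^2\neq 0$ forces the expression to be nonzero. Finally I would check that $G'$ still has rank $k$: since $\langle g_1,r\rangle_H=c^{q}\neq 0$ while $g_1$ is orthogonal to all of $\C$, the vector $r$ lies outside $\C$, so adding $e_1 r$ cannot drop the rank. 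This produces an $[n,k]_{q^2}$ code with Hermitian hull of dimension $l-1$, and induction on $l$ completes the argument.
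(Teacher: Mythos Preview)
Your argument is correct: the Gram-matrix identity $\dim(\Hull_H(\C))=k-\mathrm{rank}(G\overline{G}^{\,T})$ holds, the rank-one perturbation $G'=G+e_1r$ with $p=G\overline{r_1}^{\,T}=e_1$ modifies only the $(1,1)$ entry of the Gram matrix, and your case split on $\rho_1$ indeed always produces a scalar $c$ with $\mathrm{Tr}(c)+N(c)\rho_1\neq 0$ (the computation $(c+c^{-1})-2=(c-1)^2/c$ is valid in every characteristic, and $q+1\ge 3$ guarantees a norm-one $c\neq 1$). The rank check for $G'$ via $r\notin\C$ is also fine.

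However, the route is genuinely different from the one the paper relies on. The paper does not prove this lemma itself but cites Chen~\cite{RefJ34}, and then explicitly uses the \emph{structure} of Chen's proof to deduce Corollary~\ref{coro.Hermitian hulls from Hermitian almost self-orthogonal MDS codes}: starting from an MDS code with $(k-1)$-dimensional Hermitian hull, one obtains MDS codes with every smaller hull dimension. That deduction works because Chen's construction passes to a monomially equivalent code (via diagonal scaling of coordinates), which preserves the weight distribution and hence the MDS property. Your construction instead replaces one row of $G$ by $g_1+r$ with $r\notin\C$; the resulting code has the right dimension and hull, but there is no reason it should remain MDS. So your argument proves Lemma~\ref{lem.Hermitian hulls from l-dim Hermitian hull codes} cleanly and with nothing but linear algebra, while Chen's argument buys the extra structural consequence (Corollary~\ref{coro.Hermitian hulls from Hermitian almost self-orthogonal MDS codes}) that the paper actually needs later in Section~\ref{constructionB}.
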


In particular, according to the proof of Lemma \ref{lem.Hermitian hulls from l-dim Hermitian hull codes} in \cite[]{RefJ34}, 
for the MDS case and $l=k-1$, we can precisely derive the following corollary.

\begin{corollary}\label{coro.Hermitian hulls from Hermitian almost self-orthogonal MDS codes}
  Let $\C$ be an $[n, k]_{q^2}$ MDS code with $(k-1)$-dimensional Hermitian hull. Then there exists an $[n,k]_{q^2}$ MDS code with 
  $l'$-dimensional Hermitian hull for nonnegative integer $l'$ satisfying $0\leq l' \leq k-1$.
\end{corollary}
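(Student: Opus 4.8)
\emph{Proof proposal.} The plan is to phrase the Hermitian hull dimension in terms of a single $k\times k$ matrix and then to realize every target hull dimension by a \emph{monomial} (coordinate-scaling) modification of the given code, since such modifications preserve the MDS property for free. Fix a generator matrix $G$ of $\C$ and write $\overline{G}$ for its entrywise $q$-th power. A short computation shows that a row vector $xG$ lies in $\Hull_H(\C)$ if and only if $xM=\mathbf{0}$, where $M=G\overline{G}^{\,T}$; since $G$ has rank $k$, this gives the identity $\dim\Hull_H(\C)=k-\operatorname{rank}(M)$. In particular, the hypothesis $\dim\Hull_H(\C)=k-1$ says exactly that $M$ has rank $1$, and producing an $[n,k]_{q^2}$ MDS code with $l'$-dimensional hull amounts to producing a code of the same length and dimension, still MDS, whose associated matrix has rank $k-l'$, i.e.\ any prescribed value in $\{1,2,\dots,k\}$.

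Next I would introduce the modifications. For $D=\mathrm{diag}(d_1,\dots,d_n)$ with $d_i\in\F_{q^2}^{*}$, the code $\C D$ is monomially equivalent to $\C$; it has the same weight enumerator, hence is again MDS. Its associated matrix is $GD\overline{D}\,\overline{G}^{\,T}=G\,\mathrm{diag}(N_1,\dots,N_n)\,\overline{G}^{\,T}$ with $N_i=d_i^{q+1}$. Because the norm map $\F_{q^2}^{*}\to\F_q^{*}$ is surjective, the $N_i$ range over arbitrary elements of $\F_q^{*}$, so the reachable matrices are exactly $M_{\mathbf{N}}=\sum_{i=1}^{n}N_i\,c_i\overline{c_i}^{\,T}$, where $c_1,\dots,c_n$ are the columns of $G$ and $N_i\in\F_q^{*}$. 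Each $M_{\mathbf{N}}$ is Hermitian, and changing a single entry $N_j$ is a Hermitian rank-one update.

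The core of the argument is the rank bookkeeping. Starting from $M$ (rank $1$), I would raise the rank one step at a time: if the current matrix $M_{\mathbf{N}}$ has rank $r<k$, then $\operatorname{Col}(M_{\mathbf{N}})$ is a proper subspace of $\F_{q^2}^{k}$, and the MDS property — every $k$ of the columns $c_1,\dots,c_n$ are linearly independent, hence they cannot all lie in a proper subspace — guarantees some column $c_j\notin\operatorname{Col}(M_{\mathbf{N}})$. Replacing $N_j$ by $\nu N_j$ adds the Hermitian rank-one matrix $(\nu-1)N_j\,c_j\overline{c_j}^{\,T}$ in the genuinely new direction $c_j$, so the rank increases by exactly $1$ for every scalar $\nu\neq 1$; since $q\ge 3$ such a $\nu\in\F_q^{*}$ exists, and (the column space only growing) the coordinates chosen at successive steps are automatically distinct. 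Iterating, I reach every rank in $\{1,\dots,k\}$, and stopping at rank $k-l'$ yields an $[n,k]_{q^2}$ MDS code with $l'$-dimensional Hermitian hull for each $0\le l'\le k-1$. This is precisely the specialization of the construction in the proof of Lemma~\ref{lem.Hermitian hulls from l-dim Hermitian hull codes} to the MDS, $l=k-1$ case, the only additional observation being that every code in the family is monomially equivalent to $\C$ and therefore MDS.

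I expect the main obstacle to be making the rank bookkeeping rigorous: proving that a Hermitian rank-one update in a direction outside the current column space raises the rank by exactly one for all $\nu\neq 1$ (a bordered-matrix / Schur-complement computation), and verifying that at each of the (at most $k-1$) steps an admissible coordinate $c_j\notin\operatorname{Col}(M_{\mathbf{N}})$ together with a scalar $\nu\in\F_q^{*}\setminus\{1\}$ is available. The MDS hypothesis is exactly what removes the first difficulty, since it prevents the columns from all lying in any proper subspace, and the condition $q\ge 3$ supplies the needed scalar.
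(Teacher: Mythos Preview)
Your proposal is correct and follows essentially the same route the paper invokes: the paper does not give an independent argument but simply refers to Chen's proof of Lemma~\ref{lem.Hermitian hulls from l-dim Hermitian hull codes}, which is exactly the coordinate-scaling (monomial equivalence) construction you have spelled out, with the extra remark that monomial equivalence preserves the MDS property. One small over-attribution: the existence of a column $c_j\notin\operatorname{Col}(M_{\mathbf N})$ when $\operatorname{rank}(M_{\mathbf N})<k$ follows already from $\operatorname{rank}(G)=k$ and does not require MDS; the MDS hypothesis is used only to guarantee that each scaled code $\C D$ is again MDS.
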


Finally, we make some conventions. As readers may have noticed, throughout this paper, 
\begin{itemize}
  \item when we talk about Euclidean inner product or Euclidean hull, the finite field that matches is always $\F_q$; 
  \item when we talk about Hermitian inner product or Hermitian hull, the finite field that matches is always $\F_{q^2}$. 
\end{itemize}
For the multiplication sign "$\prod_{i=a}^{b} \cdot$" over $\F_q$, we make the following agreement: 
\begin{itemize}
  \item if $a\leq b$, then the operation is performed according to the standard multiplication over $\F_q$; 
  \item if $a>b$, then the result of this operation is always $1$, where $1$ is the unit element of $\F_q$. 
\end{itemize}
In addition, we need to emphasize the following notations: 
\begin{itemize}
  \item $\lfloor x \rfloor$ denotes the largest integer less than or equal to $x$;
  \item $\lceil x \rceil$ denotes the smallest integer greater than or equal to $x$;
  \item $[a,b]$ denotes a set of $x$ satisfying $a\leq x\leq b$, where $a\leq b$.
\end{itemize}

\section{Constructions} \label{sec-construction}
In this section, we present several new classes of MDS codes via (extended) GRS codes, whose Euclidean hulls or Hermitian hulls are entirely determined. 
We also introduce a new method to construct $[n,k]_{q^2}$ MDS codes with $(k-1)$-dimensional Hermitian hull. For Euclidean cases, some Euclidean self-orthogonal 
and one-dimensional Euclidean hull MDS codes are given as examples. 

\subsection{Construction A for MDS codes with Hermitian hulls of flexible dimensions}\label{constructionA}

Denote $\F_{q^2}=\{a_1,a_2,\cdots,a_n, a_{n+1},\cdots,a_{q^2}\}.$ It is clear that $a_i\neq a_j$ for any $1\leq i\neq j\leq q^2$. Note that 
\begin{align*}
    \prod_{1\leq j\leq q^2,j\neq i}(a_i-a_j)^{-1}=\prod_{x\in \F^*_{q^2}}x=-1, 
\end{align*}
then $u_i$ defined as Eq. (\ref{eq.Intro.ui}) can be further denoted by  
\begin{equation}\label{eq.ui over Fq2}
    u_i=\prod_{1\leq j\leq n,j\neq i}(a_i-a_j)^{-1}=-\prod_{j=n+1}^{q^2}(a_i-a_j).
\end{equation}

Based on the basic fact above, three new classes of MDS codes with Hermitian hulls of flexible dimensions can be constructed as follows.  
\begin{theorem}\label{th.ConA.1}
Let $q=p^m\geq 3$, for $1\leq k\leq q-1$, if $q^2-k\leq n\leq q^2$, then there exists an $[n,k]_{q^2}$ MDS code 
with $l$-dimensional Hermitian hull, where $0\leq l\leq n+k-q^2$.\\
\end{theorem}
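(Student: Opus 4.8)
The plan is to construct, for each target dimension $l$ in the range $0\le l\le n+k-q^2$, a single GRS code of length $n$ over $\F_{q^2}$ whose Hermitian hull has dimension exactly $l$; since every GRS code is MDS, this settles the theorem directly, without recourse to Lemma~\ref{lem.Hermitian hulls from l-dim Hermitian hull codes}. I take the $n$ code locators to be distinct elements $a_1,\dots,a_n$ of $\F_{q^2}$ and write $b_1,\dots,b_s$ for the $s=q^2-n$ deleted elements, so that $s\le k$ by the hypothesis $n\ge q^2-k$ (this is the same bookkeeping behind the formula for $u_i$ in \eqref{eq.ui over Fq2}). The starting point is to reformulate hull membership: a codeword coming from $f$ lies in the hull precisely when $\sum_{i=1}^n v_i^{q+1}f(a_i)f'(a_i)^q=0$ for every $f'$ of degree at most $k-1$, which is the condition I analyze rather than invoking Lemma~\ref{lem.Hermitian hulls} in its polynomial form.

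The key identity I would prove first is the full-field cancellation $\sum_{z\in\F_{q^2}}f(z)f'(z)^q=0$ for all $f,f'$ of degree at most $k-1$. Expanding the product gives a combination of power sums $\sum_{z\in\F_{q^2}}z^{a+bq}$ with $0\le a,b\le k-1$, and the hypothesis $k\le q-1$ forces every exponent into the range $0\le a+bq<q^2-1$; since $\sum_{z\in\F_{q^2}}z^{m}=0$ on that range, the whole sum vanishes. Taking all $v_i=1$ for the moment, this identity lets me replace the length-$n$ pairing by its complement over the deleted points, turning the hull condition into $\sum_{j=1}^{s}f(b_j)f'(b_j)^q=0$ for all $f'$.

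The flexibility then comes from the choice of column multipliers. Because $s\le k$, evaluation of degree-$(k-1)$ polynomials at the distinct points $b_1,\dots,b_s$ is surjective, so the displayed condition is equivalent to $f(b_1)=\cdots=f(b_s)=0$; the resulting hull is the space of such $f$, of dimension $k-s=n+k-q^2$. To reach a smaller value $l$, set $r=(n+k-q^2)-l$ and rescale exactly $r$ of the column multipliers so that $v_i^{q+1}$ equals some fixed norm value different from $1$ (available since $q\ge3$ gives $|\F_q^*|\ge2$), leaving the others equal to $1$. Repeating the cancellation argument, the hull condition now forces $f$ to vanish at the $s$ deleted points and at these $r$ extra locators, a total of $k-l$ distinct points with $k-l\le k$; hence the hull has dimension exactly $k-(s+r)=l$, and the code remains a GRS code, hence MDS.

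I expect the main obstacle to be the exact dimension count rather than the construction itself: one must check that the $s+r$ vanishing conditions are genuinely independent and, equally, that no codeword outside this vanishing locus slips into the hull. Both directions reduce to the surjectivity (indeed bijectivity when $s+r=k$) of polynomial evaluation at at most $k$ distinct points, i.e.\ to a Vandermonde argument, combined with the bijectivity of $y\mapsto y^q$ on $\F_{q^2}$. It is precisely the interaction between the constraint $k\le q-1$, which guarantees the full-field cancellation, and the constraint $n\ge q^2-k$, which guarantees $s+r\le k$, that makes the count come out exactly; verifying that these two inequalities are simultaneously the binding ones is the delicate point of the argument.
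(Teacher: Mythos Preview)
Your argument is correct and follows a genuinely different route from the paper's proof. The paper invokes Lemma~\ref{lem.Hermitian hulls}\,(1) (the dual-GRS characterisation via an auxiliary polynomial $g$ of degree at most $n-k-1$), rewrites $u_i$ through the product formula \eqref{eq.ui over Fq2}, and then argues that the two polynomials $f^q(x)$ and $-\prod_{j=n+1}^{q^2}(x-a_j)\,g(x)$ agree at enough points to be identically equal; from this it extracts a divisibility condition on $f$. You bypass both the dual description and the $u_i$ formula entirely, replacing them by the power-sum identity $\sum_{z\in\F_{q^2}}z^{a+bq}=0$ for $0\le a,b\le k-1$ (which is exactly where the bound $k\le q-1$ enters in your approach), and then reducing hull membership to vanishing at $s+r\le k$ points via the surjectivity of evaluation. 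Both arguments use the same perturbation of the column multipliers to sweep through all $l$, so the construction of the code is identical; only the computation of $\dim(\Hull_H(\C))$ differs. Your route is more elementary in that it does not need the explicit structure of $\GRS_k(\mathbf a,\mathbf v)^{\bot_H}$; the paper's route, on the other hand, transfers verbatim to the extended codes in Theorems~\ref{th.ConA.2} and~\ref{th.ConA.3}, where the point at infinity would require an extra case in your moment-sum calculation.
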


\begin{proof}
Let notations be the same as before. Denote $s=n+k-q^2-l$. Take $\mathbf{a}=(a_1,a_2,\dots,a_n)$ and $\mathbf{v}=(v_1,v_2,\dots,v_s,1,\dots ,1)$, 
where $v_i^{q+1}\neq 1$ for all $1\leq i\leq s$. Note that $(q^2-1)\nmid (q+1)$ for any $q$ satisfying 
$q\geq 3$, then such $\mathbf{v}$ does exist. We now consider the Hermitian hull of the $q^2$-ary $[n,k]$ MDS code $\C=\GRS_k(\mathbf{a},\mathbf{v})$. 

For any codeword $$\boldsymbol{c}=(v_1f(a_1),\dots,v_sf(a_s),f(a_{s+1}),\dots,f(a_n))\in \Hull_H(\C),$$ by the result (1) of Lemma \ref{lem.Hermitian hulls}, 
there exists a polynomial $g(x)\in \F_{q^2}[x]$ with $\deg(g(x))\leq n-k-1$ such that
\begin{align}\label{eq.ConA.1}
  \begin{split}
    & (v_1^{q+1}f^q(a_1),\dots,v_s^{q+1}f^q(a_s),f^q(a_{s+1}),\dots,f^q(a_n)) \\
    = & (u_1g(a_1),\dots,u_sg(a_s),u_{s+1}g(a_{s+1}),\dots,u_ng(a_n)). \\    
  \end{split}
\end{align}

On one hand, from the last $n-s$ coordinates of Eq. (\ref{eq.ConA.1}) and Eq. (\ref{eq.ui over Fq2}), we have
\[\begin{split}
f^q(a_i)=u_ig(a_i)=-\prod_{j=n+1}^{q^2}(a_i-a_j)g(a_i),\ s+1 \leq i\leq n. 
\end{split} \]
It follows that $f^q(x)=-\prod_{j=n+1}^{q^2}(x-a_j)g(x)$ has at least $n-s$ distinct roots. Recall that $s=n+k-q^2-l$ and $1\leq k\leq q-1$, then 
\[\begin{split}
& \deg(f^q(x))\leq q(k-1)\leq q^2-k-1=n-s-l-1\leq n-s-1,\\
& \deg(\prod_{j=n+1}^{q^2}(x-a_j)g(x))\leq (q^2-n)+(n-k-1)=q^2-k-1\leq n-s-1.
\end{split} \]
Hence, we can conclude that $f^q(x)=-\prod_{j=n+1}^{q^2}(x-a_j)g(x)$ from the fact $n-s-1<n-s$. Moreover, $\prod_{j=n+1}^{q^2}(x-a_j)\mid f^q(x)$. 

On the other hand, from the first $s$ coordinates of Eq. (\ref{eq.ConA.1}), we have
\[\begin{split}
v_i^{q+1}f^q(a_i)=u_ig(a_i)=f^q(a_i),\ 1\leq i\leq s.
\end{split} \]
For any $1\leq i\leq s$, since $v_i^{q+1}\neq 1$, we have $f(a_i)=0$. Therefore, $f(x)$ can be written as 
\begin{align*} 
    f(x)=h(x)\prod_{j=n+1}^{q^2}(x-a_j)\prod_{i=1}^{s}(x-a_i),
\end{align*}
where $h(x)\in \F_{q^2}[x]$ with $\deg(h(x))\leq n+k-q^2-s-1$. It deduces that $\dim(\Hull_H(\C))\leq n+k-q^2-s$.

Conversely, let $f(x)$ be a polynomial of form $h(x)\prod_{j=n+1}^{q^2}(x-a_j)\prod_{i=1}^{s}(x-a_i)$, where $h(x)\in \F_{q^2}[x]$ 
with $\deg(h(x))\leq n+k-q^2-s-1$. Take $g(x)=-\prod_{j=n+1}^{q^2}(x-a_j)^{-1}f^q(x)$, then $g(x)$ is a polynomial in $\F_{q^2}[x]$ 
with $\deg(g(x))\leq q(k-1)-(q^2-n)\leq n-k-1$. Moreover, by Eq. (\ref{eq.ui over Fq2}),  
we have 
\[\begin{split}
 & (v_1^{q+1}f^q(a_1),\dots,v_s^{q+1}f^q(a_s),f^q(a_{s+1}),\dots,f^q(a_n))\\ 
=& (u_1g(a_1),\dots,u_sg(a_s),u_{s+1}g(a_{s+1}),\dots,u_ng(a_n)). \\
\end{split} \]
According to the result (1) of Lemma \ref{lem.Hermitian hulls}, the vector $$(v_1f(a_1),\dots,v_sf(a_s),f(a_{s+1}),\dots,f(a_n)) \in \Hull_H(\C).$$ 
It deduces that $\dim(\Hull_H(\C))\geq n+k-q^2-s$.

In summary, $\dim(\Hull_H(\C))=n+k-q^2-s=l$. This completes the proof.
\end{proof}

\begin{theorem}\label{th.ConA.2}
Let $q=p^m\geq 3$, for $1\leq k\leq q-1$, if $q^2-k+1\leq n\leq q^2$, then there exists an $[n+1,k]_{q^2}$ MDS code with 
$l$-dimensional Hermitian hull, where $0\leq l\leq n+k-q^2-1$.
\end{theorem}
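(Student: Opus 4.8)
The plan is to mirror the proof of Theorem~\ref{th.ConA.1}, now invoking part (2) of Lemma~\ref{lem.Hermitian hulls} for extended GRS codes; the new feature is the extra scalar coordinate $f_{k-1}^q=-g_{n-k}$, which I expect to force the top coefficient of $f$ to vanish and thereby lower the attainable hull dimension by one. I would set $s=n+k-q^2-l-1$ and take $\mathbf{a}=(a_1,\dots,a_n)$ together with $\mathbf{v}=(v_1,\dots,v_s,1,\dots,1)$, where $v_i^{q+1}\neq 1$ for $1\leq i\leq s$; such $v_i$ exist because $(q^2-1)\nmid(q+1)$ when $q\geq 3$, and $s\geq 0$ is guaranteed by $l\leq n+k-q^2-1$. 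I would then work with $\C=\GRS_k(\mathbf{a},\mathbf{v},\infty)$, which is an $[n+1,k]_{q^2}$ MDS code since $n+1\leq q^2+1$, and compute $\dim(\Hull_H(\C))$.

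Given $\boldsymbol{c}=(v_1f(a_1),\dots,v_sf(a_s),f(a_{s+1}),\dots,f(a_n),f_{k-1})\in\Hull_H(\C)$, Lemma~\ref{lem.Hermitian hulls}(2) yields $g(x)\in\F_{q^2}[x]$ with $\deg(g)\leq n-k$ matching $\boldsymbol{c}$ coordinatewise, together with $f_{k-1}^q=-g_{n-k}$. Writing $P(x)=\prod_{j=n+1}^{q^2}(x-a_j)$ and $F(x)=f^q(x)+P(x)g(x)$, the coordinates with $v_i=1$ together with Eq.~(\ref{eq.ui over Fq2}) show that $F$ vanishes at the $n-s=q^2-k+l+1$ points $a_{s+1},\dots,a_n$; since $\deg(F)\leq q^2-k$ (using $k\leq q-1$), this forces $F\equiv 0$. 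Consequently $P(x)\mid f^q(x)$, and as $P$ is squarefree $P(x)\mid f(x)$; moreover the first $s$ coordinates combined with $v_i^{q+1}\neq 1$ give $f(a_i)=0$ for $1\leq i\leq s$, so $\prod_{i=1}^s(x-a_i)\mid f(x)$ as well.

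The decisive step, and the one I expect to be the crux, is to extract information from the scalar coordinate. From $F\equiv 0$ we have $g(x)=-P(x)^{-1}f^q(x)$, and a short degree count (again using $k\leq q-1$) shows $\deg(g)\leq n-k-1$, so that $g_{n-k}=0$ and hence $f_{k-1}^q=-g_{n-k}=0$, i.e. $\deg(f)\leq k-2$. Combining this with the two divisibility relations gives $f(x)=h(x)\prod_{i=1}^s(x-a_i)\,P(x)$ with $\deg(h)\leq l-1$, whence $\dim(\Hull_H(\C))\leq l$. For the reverse inequality I would run the argument backwards: for any such $f$ set $g=-P^{-1}f^q$, verify $\deg(g)\leq n-k-1$ and that all $n+1$ coordinate equations of Lemma~\ref{lem.Hermitian hulls}(2) hold (the $\infty$-equation holding precisely because $g_{n-k}=0=f_{k-1}$), so that the corresponding codeword lies in $\Hull_H(\C)$ and $\dim(\Hull_H(\C))\geq l$. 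The main obstacle is thus not the degree bookkeeping of Theorem~\ref{th.ConA.1} but the correct handling of the $\infty$-coordinate: recognizing that it pins $f_{k-1}$ to zero is what both forces the shift $s=n+k-q^2-l-1$ and explains the reduced range $0\leq l\leq n+k-q^2-1$.
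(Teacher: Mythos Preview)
Your proposal is correct and follows essentially the same approach as the paper: same choice of $s=n+k-q^2-l-1$, same $\mathbf v$, same use of Lemma~\ref{lem.Hermitian hulls}(2) and Eq.~(\ref{eq.ui over Fq2}) to identify $f^q(x)=-P(x)g(x)$, the same divisibility consequences, and the same converse. The only cosmetic difference is how you obtain $f_{k-1}=0$: you argue directly that $\deg(g)\le q(k-1)-(q^2-n)\le n-k-1$ forces $g_{n-k}=0$, whereas the paper argues by contradiction that $f_{k-1}\neq 0$ would force $q(k-1)=q^2-k$, i.e.\ $k=q$; the two arguments are equivalent.
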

\begin{proof}

  Denote $s=n+k-q^2-l-1$ and let other notations be the same as Theorem \ref{th.ConA.1}. We now consider the Hermitian hull of the $q^2$-ary $[n+1,k]$ 
  MDS code $\C=\GRS_k(\mathbf{a},\mathbf{v},\infty)$. 
  
  For any codeword $$\boldsymbol{c}=(v_1f(a_1),\dots,v_sf(a_s),f(a_{s+1}),\dots,f(a_n),f_{k-1})\in \Hull_H(\C),$$ 
  by the result (2) of Lemma \ref{lem.Hermitian hulls}, there exists a polynomial $g(x)\in \F_{q^2}[x]$ with $\deg(g(x))\leq n-k$ such that
  \begin{align}\label{eq.ConA2.1}
    \begin{split}
      & (v_1^{q+1}f^q(a_1),\dots,v_s^{q+1}f^q(a_s),f^q(a_{s+1}),\dots,f^q(a_n),f^q_{k-1}) \\
      = & (u_1g(a_1),\dots,u_sg(a_s),u_{s+1}g(a_{s+1}),\dots,u_ng(a_n),-g_{n-k}).
    \end{split}
  \end{align}
  
  On one hand, from the last $n-s+1$ coordinates of Eq. (\ref{eq.ConA2.1}) and Eq. (\ref{eq.ui over Fq2}), we have
  \[\begin{split}
  f^q(a_i)=u_ig(a_i)=-\prod_{j=n+1}^{q^2}(a_i-a_j)g(a_i),\ s+1 \leq i\leq n\ {\rm{and}}\ f^q_{k-1}=-g_{n-k}. 
  \end{split} \]
  It follows that $f^q(x)=-\prod_{j=n+1}^{q^2}(x-a_j)g(x)$ has at least $n-s$ distinct roots. Recall that $s=n+k-q^2-l-1$ and $1\leq k\leq q-1$, then 
  \[\begin{split}
  & \deg(f^q(x))\leq q(k-1)\leq q^2-k=n-s-l-1\leq n-s-1,\\
  & \deg(\prod_{j=n+1}^{q^2}(x-a_j)g(x))\leq (q^2-n)+(n-k)=q^2-k\leq n-s-1.
  \end{split} \]
  Hence, we can conclude that $f^q(x)=-\prod_{j=n+1}^{q^2}(x-a_j)g(x)$ from the fact $n-s-1<n-s$. Moreover, $\prod_{j=n+1}^{q^2}(x-a_j)\mid f^q(x)$. Now, we determine the value of $f_{k-1}$. 
  If $f_{k-1}\neq 0$, since $f^q(x)=-\prod_{j=n+1}^{q^2}(x-a_j)g(x)$ and $f^q_{k-1}=-g_{n-k}$, we have $q(k-1)=(q^2-n)+(n-k)$, which contradicts to $1\leq k\leq q-1$. Hence, $f_{k-1}=0$ and $\deg(f(x))\leq k-2$. 
  
  On the other hand, similar to the proof of Theorem \ref{th.ConA.1}, it follows that $f(a_i)=0$ for any $1\leq i\leq s$ from the first $s$ coordinates of Eq. (\ref{eq.ConA2.1}). 
  Therefore, $f(x)$ can be written as 
  \begin{align*} 
      f(x)=h(x)\prod_{j=n+1}^{q^2}(x-a_j)\prod_{i=1}^{s}(x-a_i),
  \end{align*}
  where $h(x)\in \F_{q^2}[x]$ with $\deg(h(x))\leq n+k-q^2-s-2$. It deduces that $\dim(\Hull_H(\C))\leq n+k-q^2-s-1$.
  
  Conversely, let $f(x)$ be a polynomial of form $h(x)\prod_{j=n+1}^{q^2}(x-a_j)\prod_{i=1}^{s}(x-a_i)$, where $h(x)\in \F_{q^2}[x]$ 
  with $\deg(h(x))\leq n+k-q^2-s-2$. Take $g(x)=-\prod_{j=n+1}^{q^2}(x-a_j)^{-1}f^q(x)\in \F_{q^2}[x]$, then $g(x)$ is a polynomial in 
  $\F_{q^2}[x]$ with $\deg(g(x))\leq q(k-2)-(q^2-n)\leq n-k-1$. Moreover, by Eq. (\ref{eq.ui over Fq2}), we have 
  \[\begin{split}
    & (v_1^{q+1}f^q(a_1),\dots,v_s^{q+1}f^q(a_s),f^q(a_{s+1}),\dots,f^q(a_n),0)\\
  = & (u_1g(a_1),\dots,u_sg(a_s),u_{s+1}g(a_{s+1}),\dots,u_ng(a_n),0).\\
  \end{split} \]
  According to the result (2) of Lemma \ref{lem.Hermitian hulls}, the vector $$(v_1f(a_1),\dots,v_sf(a_s),f(a_{s+1}),\dots,f(a_n),0) \in \Hull_H(\C).$$ 
  It deduces that $\dim(\Hull_H(\C))\geq n+k-q^2-s-1$.
  
  In summary, $\dim(\Hull_H(\C))=n+k-q^2-s-1=l$. This completes the proof.
  \end{proof}

  In particular, considering $k=q$, one more new classes of MDS codes with Hermitian hulls of more flexible dimensions can be obtained from the following way. 
  \begin{theorem}\label{th.ConA.3}
    Let $q=p^m\geq 3$, for any $q^2-q\leq n\leq q^2$, there exists an $[n+1,q]_{q^2}$ MDS code with $l$-dimensional Hermitian hull, where $0\leq l\leq n+q-q^2$.
    \end{theorem}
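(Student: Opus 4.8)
The plan is to run the argument of Theorem~\ref{th.ConA.2} almost verbatim with $k=q$, keeping the same code $\C=\GRS_q(\mathbf{a},\mathbf{v},\infty)$ of length $n+1$, where $\mathbf{a}=(a_1,\dots,a_n)$, $\mathbf{v}=(v_1,\dots,v_s,1,\dots,1)$ with $v_i^{q+1}\neq 1$ for $1\leq i\leq s$, and now $s=n+q-q^2-l$. One first checks, exactly as before, that such a $\mathbf{v}$ exists (here $s\leq q\leq q^2-q-2$ for $q\geq 3$, so there are enough elements of $\F_{q^2}^*$ outside the norm-one subgroup) and that $\C$ is an $[n+1,q]_{q^2}$ MDS code. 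The whole point is to locate the single step of Theorem~\ref{th.ConA.2} that uses $k\leq q-1$ and to see what happens when that hypothesis is dropped.

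That step is the forced vanishing of $f_{k-1}$. For a hull codeword we invoke Lemma~\ref{lem.Hermitian hulls}(2) to obtain $g(x)$ with $\deg g\leq n-q$ satisfying the coordinatewise identity; the ordinary coordinates $s+1,\dots,n$ together with Eq.~(\ref{eq.ui over Fq2}) give $f^q(a_i)=-\prod_{j=n+1}^{q^2}(a_i-a_j)g(a_i)$, while the $\infty$-coordinate gives $f_{q-1}^q=-g_{n-q}$. In Theorem~\ref{th.ConA.2} one argued that $f_{k-1}\neq 0$ would force the degree equality $q(k-1)=(q^2-n)+(n-k)$, impossible for $k\leq q-1$. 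For $k=q$ this equality reads $q(q-1)=q^2-q$ and is simply true, so $f_{q-1}$ is no longer forced to vanish and $f$ may have degree up to $q-1$. This is precisely the extra degree of freedom that lifts the hull dimension from the $n+q-q^2-1$ of Theorem~\ref{th.ConA.2} to $n+q-q^2$: writing $f=h(x)\prod_{j=n+1}^{q^2}(x-a_j)\prod_{i=1}^{s}(x-a_i)$ now leaves $\deg h\leq l-1$, hence at most $l$ free coefficients, giving $\dim(\Hull_H(\C))\leq l$. The converse inequality is obtained, as in the converse part of Theorem~\ref{th.ConA.2}, by starting from such an $f$, setting $g=-\prod_{j=n+1}^{q^2}(x-a_j)^{-1}f^q$, and checking that $\deg g\leq n-q$ and that the required identity holds; in particular the $\infty$-coordinate relation $f_{q-1}^q=-g_{n-q}$ now holds automatically by comparing leading coefficients.

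The one place where the $k=q$ case is genuinely more delicate than Theorem~\ref{th.ConA.2}, and the step I expect to be the real obstacle, is the degree count that forces $f^q(x)=-\prod_{j=n+1}^{q^2}(x-a_j)g(x)$ as polynomials. Both sides now have degree at most $q^2-q$, while the number of common roots supplied by the ordinary coordinates is $n-s=q^2-q+l$. For $l\geq 1$ this strictly exceeds $q^2-q$ and the identity follows immediately; but for $l=0$ the root count only equals the degree bound, so the naive argument stalls. This is exactly where the $\infty$-coordinate earns its keep: the relation $f_{q-1}^q=-g_{n-q}$ makes the coefficient of $x^{q^2-q}$ in $f^q(x)+\prod_{j=n+1}^{q^2}(x-a_j)g(x)$ equal to $f_{q-1}^q+g_{n-q}=0$, dropping its degree below $q^2-q$ and hence below the number of roots, so the polynomial is identically zero after all. (Alternatively, one may construct only the extremal $l=n+q-q^2$ code, where $s=0$, and invoke Lemma~\ref{lem.Hermitian hulls from l-dim Hermitian hull codes} for the smaller values of $l$; this confines the borderline degree count to the single case $n=q^2-q$.) Once the polynomial identity is secured, $\prod_{j=n+1}^{q^2}(x-a_j)\mid f^q$ yields $\prod_{j=n+1}^{q^2}(x-a_j)\mid f$ by squarefreeness, the first $s$ coordinates give $f(a_i)=0$ for $1\leq i\leq s$, and the dimension count above closes the proof with $\dim(\Hull_H(\C))=l$.
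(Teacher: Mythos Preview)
Your proposal is correct and follows essentially the same route as the paper's own proof: the same extended GRS code $\GRS_q(\mathbf{a},\mathbf{v},\infty)$ with $s=n+q-q^2-l$, the same use of Lemma~\ref{lem.Hermitian hulls}(2), and the same crucial observation that the $\infty$-coordinate relation $f_{q-1}^q=-g_{n-q}$ kills the top coefficient of $f^q(x)+\prod_{j=n+1}^{q^2}(x-a_j)g(x)$, forcing its degree strictly below $n-s$. The paper treats this cancellation uniformly (noting that $f_{q-1}\neq 0$ is equivalent to $g_{n-q}\neq 0$, so the sum always has degree $\leq q^2-q-1\leq n-s-1$) rather than splitting off the $l=0$ case as you do, but the content is identical.
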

    \begin{proof}
    Denote $s=n+q-q^2-l$ and let other notations be the same as before. We now consider the Hermitian hull of the $q^2$-ary $[n+1,q]$ MDS code 
    $\C=\GRS_k(\mathbf{a},\mathbf{v},\infty)$. 
      
      For any codeword $$\boldsymbol{c}=(v_1f(a_1),\dots,v_sf(a_s),f(a_{s+1}),\dots,f(a_n),f_{q-1})\in \Hull_H(\C),$$ 
      by the result (2) of Lemma \ref{lem.Hermitian hulls}, there exists a polynomial $g(x)\in \F_{q^2}[x]$ with $\deg(g(x))\leq n-q$ such that
      \begin{align}\label{eq.ConA3.1}
        \begin{split}
          & (v_1^{q+1}f^q(a_1),\dots,v_s^{q+1}f^q(a_s),f^q(a_{s+1}),\dots,f^q(a_n),f^q_{q-1}) \\
          = & (u_1g(a_1),\dots,u_sg(a_s),u_{s+1}g(a_{s+1}),\dots,u_ng(a_n),-g_{n-q}).
        \end{split}
      \end{align}
      
      On one hand, from the last $n-s+1$ coordinates of Eq. (\ref{eq.ConA3.1}) and Eq. (\ref{eq.ui over Fq2}), we have
      \[\begin{split}
      f^q(a_i)=u_ig(a_i)=-\prod_{j=n+1}^{q^2}(a_i-a_j)g(a_i),\ s+1 \leq i\leq n,\ {\rm{and}}\ f^q_{q-1}=-g_{n-q}.  
      \end{split} \]
      Note that 
      \[\begin{split}
      & \deg(f^q(x))\leq q(q-1)=q^2-q,\\
      & \deg(\prod_{j=n+1}^{q^2}(x-a_j)g(x))\leq (q^2-n)+(n-q)=q^2-q. 
      \end{split} \]
      Since $f^q_{q-1}=-g_{n-q}$, then $\deg(f^q(x))=q^2-q$, i.e., $f_{q-1}\neq 0$ is equivalent to $\deg(\prod_{j=n+1}^{q^2}(x-a_j)g(x))=q^2-q$, i.e., $g_{n-q}\neq 0$. 
      It follows that $$\deg(f^q(x)+\prod_{j=n+1}^{q^2}(x-a_j)g(x))\leq q^2-q-1\leq n-s-l-1\leq n-s-1$$ for any possible $f(x)$. Hence, we can conclude that 
      $f^q(x)=-\prod_{j=n+1}^{q^2}(x-a_j)g(x)$ from the fact $n-s-1<n-s$. Moreover, $\prod_{j=n+1}^{q^2}(x-a_j)\mid f^q(x)$.
      
      On the other hand, taking a similar manner to Theorem \ref{th.ConA.2}, $f(x)$ can be written as 
      \begin{align*} 
          f(x)=h(x)\prod_{j=n+1}^{q^2}(x-a_j)\prod_{i=1}^{s}(x-a_i),
      \end{align*}
      where $h(x)\in \F_{q^2}[x]$ with $\deg(h(x))\leq n+q-q^2-s-1$. It deduces that $\dim(\Hull_H(\C))\leq n+q-q^2-s$.
      
      Conversely, let $f(x)$ be a polynomial of form $h(x)\prod_{j=n+1}^{q^2}(x-a_j)\prod_{i=1}^{s}(x-a_i)$, where $h(x)\in \F_{q^2}[x]$ with $\deg(h(x))\leq n+q-q^2-s-1$.
      Take $g(x)=-\prod_{i=n+1}^{q^2}(x-a_i)^{-1}f^q(x)\in \F_{q^2}[x]$, then $g(x)$ is a polynomial in $\F_{q^2}[x]$ with $\deg(g(x))\leq q(q-1)-(q^2-n)=n-q$. Moreover, 
      by Eq. (\ref{eq.ui over Fq2}), we have   
      \[\begin{split}
        & (v_1^{q+1}f^q(a_1),\dots,v_s^{q+1}f^q(a_s),f^q(a_{s+1}),\dots,f^q(a_n),f^q_{q-1})\\
      = & (u_1g(a_1),\dots,u_sg(a_s),u_{s+1}g(a_{s+1}),\dots,u_ng(a_n),-g_{n-q}).
      \end{split} \]
      According to the result (2) of Lemma \ref{lem.Hermitian hulls}, the vector $$(v_1f(a_1),\dots,v_sf(a_s),f(a_{s+1}),\dots,f(a_n),f_{q-1}) \in \Hull_H(\C).$$ 
      It deduces that $\dim(\Hull_H(\C))\geq n+q-q^2-s$.
      
      In summary, $\dim(\Hull_H(\C))=n+q-q^2-s=l$. This completes the proof.
      \end{proof}

\begin{remark}\label{rem1.conA.comparion}
  \begin{enumerate}
    \item [{\rm (1)}] Discuss the range of lengths in Theorems \ref{th.ConA.1}, \ref{th.ConA.2} and \ref{th.ConA.3} as follows:
    \begin{itemize}
      \item For Theorem \ref{th.ConA.1}: It follows from $1\leq k\leq q-1$ that $n\geq q^2-k\geq q^2-q+1$. Since $q\geq 3$, we have $n>q+1$;
      \item For Theorem \ref{th.ConA.2}: It follows from $1\leq k\leq q-1$ that $n\geq q^2-k+1\geq q^2-q+2$. Since $q\geq 3$, we have $n>q+1$;
      \item For Theorem \ref{th.ConA.3}: It follows from $q\geq 3$ that $n\geq q^2-q> q+1$. 
    \end{itemize}
    Hence, lengths in Theorems \ref{th.ConA.1}, \ref{th.ConA.2} and \ref{th.ConA.3} are at least $q^2-q+1$ and must be greater than $q+1$.
    \item [{\rm (2)}] Note that the following facts: 
    \begin{itemize}
      \item the dimension $l$ of the Hermitian hull can take $n+k-q^2$ in Theorem \ref{th.ConA.1};
      \item the length $n+1$ of the MDS codes can take $q^2+1$ in Theorem \ref{th.ConA.2};
      \item the dimension $k$ of the MDS codes can and only can take $q$ in Theorem \ref{th.ConA.3}. 
    \end{itemize}
    Hence, MDS codes constructed by Theorems \ref{th.ConA.1}, \ref{th.ConA.2} and \ref{th.ConA.3} will not be exactly the same as each other. 
  \end{enumerate}
\end{remark}

\subsection{Construction B for MDS codes with Hermitian hulls of arbitrary dimensions}\label{constructionB}
In this subsection, we construct a new classes of MDS codes with Hermitian hulls of arbitrary dimensions. 
To this end, we need the following Lemma. 

\begin{lemma}\label{lem.Hermitian almost self-orthogonal EGRS}
  Let $\GRS_k(\mathbf{a},\mathbf{v},\infty)$ be an extended GRS code associated to 
  $\mathbf{a}$ and $\mathbf{v}$ over $\F_{q^2}$. Let $a_i$ and $v_i$ be the $i$-th 
  elements of $\mathbf{a}$ and $\mathbf{v}$, respectively, where $1\leq i\leq n$.
  Assume that there exists a monic polynomial $h(x)\in \F_{q^2}[x]$ with $\deg(h(x))\leq q+n-(q+1)k$ such that 
\begin{align}\label{eq.Hermitian almost self-orthogonal EGRS}
  \lambda u_ih(a_i)=v_i^{q+1},\ 1\leq i\leq n,
\end{align}
where $\lambda\in \F_{q^2}^*$. If $\deg(h(x))=q+n-(q+1)k$ and $\lambda=-1$ do not hold at the same time, 
then $\GRS_k(\mathbf{a},\mathbf{v},\infty)$ is a $(k-1)$-dimensional Hermitian hull MDS code of length $n$. 
\end{lemma}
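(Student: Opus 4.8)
The plan is to read off $\Hull_H(\C)$ directly from part (2) of Lemma \ref{lem.Hermitian hulls}, where $\C=\GRS_k(\mathbf{a},\mathbf{v},\infty)$. A codeword attached to a polynomial $f$ with $\deg(f)\le k-1$ lies in the hull exactly when there is a polynomial $g$ with $\deg(g)\le n-k$ that matches the twisted evaluations $v_i^{q+1}f^q(a_i)=u_ig(a_i)$ on $a_1,\dots,a_n$ and satisfies the extra coordinate equation $f_{k-1}^q=-g_{n-k}$. My first step is to produce the correct $g$ using the hypothesis Eq. (\ref{eq.Hermitian almost self-orthogonal EGRS}): set $g(x)=\lambda h(x)f^q(x)$. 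Then $u_ig(a_i)=\big(\lambda u_ih(a_i)\big)f^q(a_i)=v_i^{q+1}f^q(a_i)$ for every $1\le i\le n$, so the first $n$ equations hold automatically, while a degree count gives $\deg(g)\le\big(q+n-(q+1)k\big)+q(k-1)=n-k$, which is admissible.

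Next I would establish uniqueness of this $g$. Any admissible $g$ has degree at most $n-k<n$ and agrees with $\lambda h f^q$ at the $n$ distinct points $a_1,\dots,a_n$, so the difference has more roots than its degree and the two polynomials coincide; hence $g=\lambda h f^q$ is forced. Consequently membership in $\Hull_H(\C)$ collapses to the single scalar condition $f_{k-1}^q=-g_{n-k}$, where $g_{n-k}$ is the coefficient of $x^{n-k}$ in $\lambda h f^q$. The entire question is thereby reduced to tracking one leading coefficient.

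Finally I would run the case analysis on $f_{k-1}$. If $f_{k-1}=0$, then $\deg(\lambda h f^q)\le\big(q+n-(q+1)k\big)+q(k-2)=n-k-q<n-k$, so $g_{n-k}=0$ and the scalar condition $f_{k-1}^q=0=-g_{n-k}$ is satisfied; thus every $f$ with $\deg(f)\le k-2$ produces a hull codeword, and these form a subspace of dimension $k-1$. If instead $f_{k-1}\ne 0$, then $g_{n-k}$ vanishes when $\deg(h)<q+n-(q+1)k$ and equals $\lambda f_{k-1}^q$ when $\deg(h)=q+n-(q+1)k$, using that $h$ is monic. In the first situation $f_{k-1}^q=-g_{n-k}=0$ is impossible, and in the second $f_{k-1}^q=-\lambda f_{k-1}^q$ forces $\lambda=-1$. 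Since the hypothesis forbids $\deg(h)=q+n-(q+1)k$ and $\lambda=-1$ from holding simultaneously, no codeword with $f_{k-1}\ne 0$ can lie in the hull, and therefore $\dim\big(\Hull_H(\C)\big)=k-1$; the code is MDS since it is an extended GRS code. The delicate point, and the only place the hypothesis is genuinely used, is this bookkeeping of the coefficient of $x^{n-k}$, where the monicity of $h$ and the exact threshold degree $q+n-(q+1)k$ must be tracked with care.
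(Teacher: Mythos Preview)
Your proof is correct and follows essentially the same route as the paper: define $g=\lambda h f^q$, verify the degree bound and the first $n$ evaluation equations, then split on whether $f_{k-1}=0$ to analyse the last-coordinate condition $f_{k-1}^q=-g_{n-k}$. One point you handle more carefully than the paper is the uniqueness of $g$: the paper implicitly assumes that its constructed $g$ is the only candidate, whereas you argue this explicitly (two polynomials of degree at most $n-k<n$ agreeing at the $n$ distinct points $a_1,\dots,a_n$ must coincide), and this is precisely what is needed to conclude that no codeword with $f_{k-1}\ne 0$ lies in the hull.
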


  \begin{proof}
  Multiplying both sides of Eq. (\ref{eq.Hermitian almost self-orthogonal EGRS}) by $f^q(a_i)$, we have 
    \begin{equation}\label{eq.Hermitian almost self-orthogonal EGRS.2}
      \lambda u_ih(a_i)f^{q}(a_i)=v_i^{q+1}f^{q}(a_i),\ 1\leq i\leq n,
    \end{equation}
    where $\lambda\in \F_{q^2}^*$.
  
  Let $g(x)=\lambda h(x)f^{q}(x)$ and substitute $g(a_i)$ into Eq. (\ref{eq.Hermitian almost self-orthogonal EGRS.2}), then 
  \begin{align*}
      \deg(g(x))\leq q+n-(q+1)k+q(k-1)=n-k
  \end{align*}
  and 
  \begin{equation*}
    u_ig(a_i)=v_i^{q+1}f^{q}(a_i),\ 1\leq i\leq n.
  \end{equation*}

  We discuss the relationship between $f_{k-1}$ and $g_{n-k}$ in two ways. 
  
  $\bullet$ $\textbf{Case 1:}$ When $\deg(f(x))<k-1$, we have $\deg(f(x))\leq k-2$ and $f_{k-1}=0$. Note that, in this case,    
  \begin{equation*}
    \deg(g(x))\leq q+n-(q+1)k+q(k-2)=n-k-q<n-k,
  \end{equation*}
  which implies $g_{n-k}=0$. Hence, $f^q_{k-1}=0=-g_{n-k}$ and the fact has nothing to do with $\lambda$.
  
  $\bullet$ $\textbf{Case 2:}$ When $\deg(f(x))=k-1$, we have $f_{k-1}\neq 0$ and 
  \begin{equation*}
    \deg(g(x))\leq q+n-(q+1)k+q(k-1)=n-k.
  \end{equation*}
    Note that $\deg(g(x))=n-k$ if and only if $\deg(h(x))=q+n-(q+1)k$. 
    \begin{itemize}
      \item $\textbf{Subcase 2.1:}$ When $\deg(h(x))<q+n-(q+1)k$, we still have $\deg(g(x))<n-k$ and $g_{n-k}=0$. 
      Then $f^q_{k-1}\neq -g_{n-k}$ for the fact $f_{k-1}\neq 0$. 

      \item $\textbf{Subcase 2.2:}$ When $\deg(h(x))=q+n-(q+1)k$, i.e., $\deg(g(x))=n-k$ and $g_{n-k}\neq 0$. Note that $h(x)$ is 
      a monic polynomial over $\F_{q^2}$, then the coefficients of the highest degree of $g(x)$ and $\lambda h(x)f^q(x)$ are $g_{n-k}$ 
      and $\lambda f^{q}_{k-1}$, respectively. Since $g(x)=\lambda h(x)f^{q}(x)$, we have $f_{k-1}^{q}=\lambda g_{n-k}$. Therefore, it 
      is easy to see that $f_{k-1}^{q}=-g_{n-k}$ if and only if $\lambda=-1$.
    \end{itemize}
  
  In summary, when $\deg(h(x))=q+n-(q+1)k$ and $\lambda=-1$ do not hold at the same time, $f^q_{k-1}=-g_{n-k}$ follows if and only if $\deg(f(x))<k-1$, 
  if and only if $f_{k-1}=-g_{n-k}=0$. Hence, by the result (2) of Lemma \ref{lem.Hermitian hulls}, for any polynomial $f(x)\in \F_{q^2}[x]$, we have  
  \begin{align*}
   \mathbf{c}=(v_1f(a_1),v_2f(a_2),\dots,v_nf(a_n),0)\in \Hull_H(\GRS_k(\mathbf{a},\mathbf{v},\infty)),\ {\rm{where}}\ \deg(f(x))\leq k-2 
  \end{align*}
  and 
  \begin{align*}
  \mathbf{c}=(v_1f(a_1),v_2f(a_2),\dots,v_nf(a_n),f_{k-1})\notin \Hull_H(\GRS_k(\mathbf{a},\mathbf{v},\infty)),\ {\rm{where}}\ \deg(f(x))= k-1.
  \end{align*}
  Clearly, we can conclude that $\dim(\Hull_H(\GRS_k(\mathbf{a},\mathbf{v},\infty)))=k-1$. The desired result follows.
\end{proof}    

Moreover, a criterion for an extended GRS code being Hermitian self-orthogonal can be derived. We write it in the following corollary and one can finish the proof 
in a similar manner to Lemma \ref{lem.Hermitian almost self-orthogonal EGRS}.

\begin{corollary}\label{coro.Hermitian self-orthogonal EGRS}
  Let $\GRS_k(\mathbf{a},\mathbf{v},\infty)$ be an extended GRS code associated to $\mathbf{a}$ and $\mathbf{v}$ over $\F_{q^2}$. Let $a_i$ and $v_i$ be the $i$-th 
  elements of $\mathbf{a}$ and $\mathbf{v}$, respectively, where $1\leq i\leq n$. If there exists a monic polynomial $h(x)\in \F_{q^2}[x]$ with $\deg(h(x))=q+n-(q+1)k$ 
  such that
  \begin{equation}\label{eq.Hermitian self-orthogonal EGRS}
    -u_ih(a_i)=v_i^{q+1},\ 1\leq i\leq n,
  \end{equation}
  then $\GRS_k(\mathbf{a},\mathbf{v},\infty)$ is a Hermitian self-orthogonal MDS code of length $n$.
\end{corollary}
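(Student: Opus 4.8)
The plan is to follow the skeleton of the proof of Lemma \ref{lem.Hermitian almost self-orthogonal EGRS} verbatim, specializing to the boundary situation $\lambda=-1$ together with $\deg(h(x))=q+n-(q+1)k$ --- that is, precisely the case that was \emph{excluded} from Lemma \ref{lem.Hermitian almost self-orthogonal EGRS}. The point is that in that excluded case the computation in Subcase 2.2 flips: instead of $f_{k-1}^q\neq -g_{n-k}$ we obtain $f_{k-1}^q=-g_{n-k}$, so every codeword (including those with $\deg(f)=k-1$) lands in the Hermitian dual, which is exactly self-orthogonality.

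Concretely, I would fix an arbitrary codeword $\mathbf{c}=(v_1f(a_1),\dots,v_nf(a_n),f_{k-1})$ with $\deg(f(x))\leq k-1$ and multiply Eq. (\ref{eq.Hermitian self-orthogonal EGRS}) through by $f^q(a_i)$ to get $-u_ih(a_i)f^q(a_i)=v_i^{q+1}f^q(a_i)$ for $1\leq i\leq n$. Setting $g(x)=-h(x)f^q(x)$ then yields $u_ig(a_i)=v_i^{q+1}f^q(a_i)$ for all $i$, together with the degree bound $\deg(g(x))\leq \bigl(q+n-(q+1)k\bigr)+q(k-1)=n-k$, which is exactly the bound required by the result (2) of Lemma \ref{lem.Hermitian hulls}. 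Everything so far is identical to the lemma.

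The only remaining verification is the $\infty$-coordinate condition $f_{k-1}^q=-g_{n-k}$, and this is the step where $\lambda=-1$ is essential. When $\deg(f(x))<k-1$, both $f_{k-1}$ and $g_{n-k}$ vanish --- the degree of $g$ drops to at most $n-k-q<n-k$ --- so the identity holds trivially, just as in Case 1 of the lemma. When $\deg(f(x))=k-1$, I would compare leading coefficients: since $h(x)$ is monic of degree exactly $q+n-(q+1)k$ and $f^q(x)$ has leading coefficient $f_{k-1}^q$, the product $g(x)=-h(x)f^q(x)$ has $\deg(g(x))=n-k$ with $g_{n-k}=-f_{k-1}^q$, i.e. $f_{k-1}^q=-g_{n-k}$. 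Hence in both cases the condition of the result (2) of Lemma \ref{lem.Hermitian hulls} is met.

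It then follows that every such $\mathbf{c}$ lies in $\GRS_k(\mathbf{a},\mathbf{v},\infty)^{\bot_H}$, so $\GRS_k(\mathbf{a},\mathbf{v},\infty)\subseteq \GRS_k(\mathbf{a},\mathbf{v},\infty)^{\bot_H}$, which is exactly the statement that the code is Hermitian self-orthogonal; MDS-ness is automatic since every extended GRS code is MDS. The main (and essentially only) delicate step is the leading-coefficient bookkeeping in the $\deg(f)=k-1$ case: this is the precise spot where the present corollary diverges from Lemma \ref{lem.Hermitian almost self-orthogonal EGRS}, because there Subcase 2.2 produced $f_{k-1}^q=\lambda g_{n-k}$, which coincides with $-g_{n-k}$ exactly when $\lambda=-1$ --- a failure mode to be avoided in the lemma, but the very mechanism that forces self-orthogonality here.
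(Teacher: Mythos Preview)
Your argument is correct and is exactly the approach the paper intends: it explicitly states that one ``can finish the proof in a similar manner to Lemma \ref{lem.Hermitian almost self-orthogonal EGRS},'' and your specialization to $\lambda=-1$ with $\deg(h(x))=q+n-(q+1)k$ is precisely that similar manner, with the Subcase~2.2 computation now yielding $f_{k-1}^q=-g_{n-k}$ rather than obstructing it.
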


As we all know, if $\Hull_H(\C)=\C$, i.e., $\dim(\C)=\dim(\Hull_H(\C))=k$, then $\C$ is called a Hermitian self-orthogonal code. 
For a unified description, we introduce the definition of Hermitian almost self-orthogonal codes as follows. 

\begin{definition}\label{def.Hermitian almost self-orthogonal}
For a linear code $\C$ of dimension $k$, if $\dim(\Hull_H(\C))=k-1$, then we call $\C$ a Hermitian 
almost self-orthogonal code. In particular, if $\C$ is Hermitian almost self-orthogonal and MDS, then 
we call $\C$ a Hermitian almost self-orthogonal MDS code. 
\end{definition}

Now, we consider a kind of decompositions of $\F_{q^2}$, which was first introduced in \cite[]{RefJ35}. 
Assume $n=m(q-1)$, $1\leq m\leq q$. Set 
\begin{equation}
  \begin{split}
    I_t & = \{x\cdot \omega^{t-1} :\ x\in \F_q^*\} \\
        & = \{\omega^{t-1},\omega^{q+1+t-1},\cdots,\omega^{(q+1)i+t-1},\cdots,\omega^{(q+1)(q-2)+t-1}\}, \\
  \end{split}
\end{equation}
where $1\leq t\leq q$. Denote $$\mathcal{I} =\bigcup_{t=1}^{m}I_t=\{a_1,a_2,\cdots,a_n\},$$ where $a_i=\omega^{(q+1)i+t-1}$ for $1\leq i\leq n$. 
From \cite[]{RefJ35}, $u_i$ can be written as certain concrete form for different $q$. We rephrase the result in Lemma \ref{ConB.lemma_ui}.

\begin{lemma}\label{ConB.lemma_ui}(\cite[]{RefJ35})
Let notations be the same as before.
\begin{enumerate}
  \item [{\rm (1)}] If $q$ is even, then 
  \begin{equation}\label{ConB.eq1}
    \begin{split}
      u_i = & (q-1)^{-1}a_i^{2-q}\omega^{(t-1)(m-2)+\frac{m(m+1)}{2}-(q+1)i_0} \\
          = & (q-1)^{-1}\omega^{(t-1)(m-q)+\frac{m(m+1)}{2}-(q+1)[(q-2)i+i_0]} \\
    \end{split}
  \end{equation}
  for some integer $i_0$. 

  \item [{\rm (2)}] If $q$ is odd, then 
  \begin{equation}\label{ConB.eq2}
    \begin{split}
      u_i = & (q-1)^{-1}a_i^{2-q}\omega^{(t-1)(m-2)+\frac{(m-q-1)(m-1)}{2}-(q+1)i_0} \\
          = & (q-1)^{-1}\omega^{(t-1)(m-q)+\frac{(m-q-1)(m-1)}{2}-(q+1)[(q-2)i+i_0]} \\
    \end{split}
  \end{equation}
  for some integer $i_0$.
\end{enumerate}
\end{lemma}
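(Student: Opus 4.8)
The plan is to compute $u_i^{-1}=\prod_{j\neq i}(a_i-a_j)$ as the derivative of the locator polynomial and to exploit the multiplicative–coset structure of $\mathcal{I}$. Writing $F(x)=\prod_{a\in\mathcal{I}}(x-a)$, one has $u_i^{-1}=F'(a_i)$ directly from Eq.~(\ref{eq.Intro.ui}), since every summand of $F'$ other than the $i$-th vanishes at $a_i$. Now each block $I_t=\omega^{t-1}\F_q^*$ is a coset of $\F_q^*$, and because $\prod_{c\in\F_q^*}(y-c)=y^{q-1}-1$ a short substitution gives $\prod_{a\in I_t}(x-a)=x^{q-1}-\omega^{(t-1)(q-1)}$. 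Setting $\beta:=\omega^{q-1}$, which is a primitive $(q+1)$-th root of unity, this yields the factorization $F(x)=G(x^{q-1})$ with $G(y)=\prod_{t=1}^{m}(y-\beta^{t-1})$. This step already isolates the $q$-power structure responsible for the $a_i^{2-q}$ prefactor and for the two cases in the statement.

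First I would differentiate through the substitution $y=x^{q-1}$, obtaining $F'(x)=(q-1)x^{q-2}G'(x^{q-1})$. Since $a_i\in I_t$ forces $a_i^{q-1}=\beta^{t-1}$, evaluating at $a_i$ gives $u_i^{-1}=(q-1)a_i^{q-2}G'(\beta^{t-1})$, that is, $u_i=(q-1)^{-1}a_i^{2-q}\,G'(\beta^{t-1})^{-1}$. This accounts for the factor $(q-1)^{-1}a_i^{2-q}$ appearing in (\ref{ConB.eq1}) and (\ref{ConB.eq2}), so everything is reduced to the single quantity $G'(\beta^{t-1})=\prod_{0\le r\le m-1,\,r\neq t-1}(\beta^{t-1}-\beta^r)$. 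Factoring $\beta^{t-1}-\beta^r=\beta^{t-1}(1-\beta^{r-t+1})$ and splitting the indices $r>t-1$ from $r<t-1$ (using $1-\beta^{-j}=-\beta^{-j}(1-\beta^{j})$), I would rewrite $G'(\beta^{t-1})$ as an explicit power of $\beta=\omega^{q-1}$ times two cyclotomic products of the form $\prod_{k=1}^{\ell}(1-\beta^{k})$.

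The remaining step is to evaluate those products. Here I would pass to the quotient $\F_{q^2}^*/\F_q^*\cong\mu_{q+1}$ (the group of $(q+1)$-th roots of unity), under which $\omega$ maps to $\beta$ and $x\mapsto x^{q-1}$ is an isomorphism onto $\mu_{q+1}$. Since $(1-\beta^{k})^{q-1}=-\beta^{-k}$, each factor $1-\beta^{k}$ reduces modulo $\F_q^*$ to a clean power of $\omega$; collecting the resulting exponents in $\mathbb{Z}/(q+1)$ produces the explicit exponent of $\omega$ in the statement, while the undetermined $\F_q^*=\langle\omega^{q+1}\rangle$ component is precisely what is absorbed into the free integer $i_0$ via the factor $\omega^{-(q+1)i_0}$; rewriting $a_i^{2-q}$ as $\omega^{(2-q)[(q+1)i+t-1]}$ then converts the first line of each formula into the second. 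The hard part will be exactly this bookkeeping: the sign $-1$ is the only place where the parity of $q$ intervenes, since $-1$ is trivial in $\mu_{q+1}$ when $q$ is even but is the unique order-two element $\beta^{(q+1)/2}$ when $q$ is odd. Tracking this sign carefully through the cyclotomic products and the accompanying power of $\beta$ is what forces the two distinct closed forms $\frac{m(m+1)}{2}$ and $\frac{(m-q-1)(m-1)}{2}$, and keeping the exponent arithmetic consistent modulo $q^2-1$ is where the computation is most error-prone.
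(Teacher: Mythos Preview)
Your approach is sound, but note that the paper does not actually prove this lemma: it is quoted from \cite{RefJ35} (Guo--Li--Liu) and stated without argument, so there is no in-paper proof to compare against. Your derivation via the locator polynomial $F(x)=G(x^{q-1})$ with $G(y)=\prod_{t=1}^m(y-\beta^{t-1})$ and the chain rule $F'(x)=(q-1)x^{q-2}G'(x^{q-1})$ is the natural route and is essentially how the cited source proceeds; the identification $u_i=(q-1)^{-1}a_i^{2-q}G'(\beta^{t-1})^{-1}$ is correct, and your observation that $(1-\beta^k)^{q-1}=-\beta^{-k}$ pins down each factor $1-\beta^k$ modulo $\F_q^*=\langle\omega^{q+1}\rangle$, with the residual $\F_q^*$-ambiguity absorbed into $i_0$, is exactly why the statement only asserts the formula ``for some integer $i_0$''. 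The parity split you isolate---whether $-1$ is trivial in $\mu_{q+1}$ or equals $\beta^{(q+1)/2}$---is indeed the sole source of the two distinct exponent expressions. There is no gap in the plan; only the explicit exponent bookkeeping remains to be written out.
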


\begin{theorem}\label{th.Hermitian almost self-orthogonal.ConA.}
Let $q$ be a prime power and $2\leq m\leq q$. Then there exists an $[m(q-1)+1,k]_{q^2}$ 
Hermitian almost self-orthogonal MDS code, where $1\leq k\leq m-1$. 
\end{theorem}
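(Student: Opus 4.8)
The plan is to realize the required code as an extended GRS code $\GRS_k(\mathbf{a},\mathbf{v},\infty)$ and to invoke Lemma \ref{lem.Hermitian almost self-orthogonal EGRS}. Concretely, I would set $n=m(q-1)$ and take the finite code locators to be the decomposition set $\mathcal{I}=\bigcup_{t=1}^{m}I_t=\{a_1,\dots,a_n\}$ introduced just before Lemma \ref{ConB.lemma_ui}, so that the extended code has length $n+1=m(q-1)+1$, and I would feed in the explicit values of $u_i$ from Lemma \ref{ConB.lemma_ui}. With this setup, the whole argument reduces to exhibiting a single monic polynomial $h(x)$, a scalar $\lambda\in\F_{q^2}^*$, and column multipliers $v_1,\dots,v_n\in\F_{q^2}^*$ satisfying $\lambda u_ih(a_i)=v_i^{q+1}$ for all $i$, and then checking the degree and non-degeneracy bookkeeping demanded by Lemma \ref{lem.Hermitian almost self-orthogonal EGRS}.

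The decisive choice is $h(x)=x^{q-m}$, which is monic of degree $q-m\ge 0$. Using $a_i^{2-q}\cdot a_i^{q-m}=a_i^{2-m}$ and $a_i=\omega^{(q+1)i+t-1}$, a short computation with the first form of $u_i$ in Lemma \ref{ConB.lemma_ui} shows that the factor $a_i^{2-m}$ cancels the term $(t-1)(m-2)$ in the exponent of $u_i$ \emph{exactly}, leaving $u_ih(a_i)=(q-1)^{-1}\omega^{C_0}\,\omega^{(q+1)[(2-m)i-i_0]}$ for a constant $C_0$ independent of $i$ (namely $C_0=\tfrac{m(m+1)}{2}$ when $q$ is even and $C_0=\tfrac{(m-q-1)(m-1)}{2}$ when $q$ is odd). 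Since $(q-1)^{-1}=-1\in\F_q^*$ and every power $\omega^{(q+1)s}$ lies in $\F_q^*$, each $u_ih(a_i)$ belongs to the single coset $\omega^{C_0}\F_q^*$. Taking $\lambda=\omega^{-C_0}$ therefore forces $\lambda u_ih(a_i)\in\F_q^*$ for every $i$. Because the norm map $x\mapsto x^{q+1}$ from $\F_{q^2}^*$ onto $\F_q^*$ is surjective, I can then choose, for each $i$, an element $v_i\in\F_{q^2}^*$ with $v_i^{q+1}=\lambda u_ih(a_i)$; this produces $\mathbf{v}=(v_1,\dots,v_n)$ satisfying Eq. (\ref{eq.Hermitian almost self-orthogonal EGRS}) with the chosen $h$ and $\lambda$.

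It remains to verify the hypotheses of Lemma \ref{lem.Hermitian almost self-orthogonal EGRS} for every $k$ with $1\le k\le m-1$. The degree condition is $\deg(h)=q-m\le q+n-(q+1)k=q+m(q-1)-(q+1)k$, equivalently $(q+1)k\le mq$; since $(m-1)(q+1)=mq-(q+1-m)<mq$ for $m\le q$, this holds, and strictly, for all $k\le m-1$. Moreover, equality $\deg(h)=q+n-(q+1)k$ would force $(q+1)\mid mq$, hence $(q+1)\mid m$ (as $\gcd(q,q+1)=1$), which is impossible for $0<m\le q$; thus $\deg(h)<q+n-(q+1)k$ strictly, so the excluded case ``$\deg(h)=q+n-(q+1)k$ and $\lambda=-1$'' never occurs, regardless of $\lambda$. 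Lemma \ref{lem.Hermitian almost self-orthogonal EGRS} then yields $\dim(\Hull_H(\GRS_k(\mathbf{a},\mathbf{v},\infty)))=k-1$, i.e. $\GRS_k(\mathbf{a},\mathbf{v},\infty)$ is a Hermitian almost self-orthogonal MDS code of length $m(q-1)+1$ and dimension $k$. Note that the same data $\mathbf{a},\mathbf{v},h,\lambda$ serve all admissible $k$ simultaneously, since Eq. (\ref{eq.Hermitian almost self-orthogonal EGRS}) does not involve $k$.

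The genuinely delicate step is the exponent computation in the second paragraph: one must track the contributions of $a_i^{2-m}$ and of the block index $t$ modulo $q+1$ and confirm that all $i$- and $t$-dependence collapses into powers of $\omega^{q+1}$, so that the products $u_ih(a_i)$ occupy one $\F_q^*$-coset; the even and odd cases must be handled separately with the respective constants $C_0$ drawn from Lemma \ref{ConB.lemma_ui}. Everything else—the surjectivity of the norm map, the existence of the $v_i$, and the degree and non-degeneracy inequalities—is routine.
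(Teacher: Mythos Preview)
Your proposal is correct and follows essentially the same route as the paper: choose the locators $\mathcal{I}=\bigcup_{t=1}^m I_t$, take $h(x)=x^{q-m}$, use Lemma~\ref{ConB.lemma_ui} to see that all $u_ih(a_i)$ lie in a single $\F_q^*$-coset of $\F_{q^2}^*$, pick $\lambda$ to land in $\F_q^*$, lift via the norm map to get the $v_i$, and then check the strict degree inequality to invoke Lemma~\ref{lem.Hermitian almost self-orthogonal EGRS}. The only cosmetic differences are that the paper splits explicitly into the cases $q$ even and $q$ odd (with the corresponding constants $C_0$) and absorbs the factor $(q-1)^{-1}=-1$ into its choice of $\lambda$, whereas you keep it separate; your divisibility argument for the strict inequality $(q+1)\nmid mq$ is equivalent to the paper's observation that $k=m-\tfrac{m}{q+1}$ cannot be an integer.
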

\begin{proof}
Let notations be the same as before. We continue our proof in the following two cases.

$\textbf{Case 1:}$ When $q$ is even, set $\lambda = (q-1)\omega^{-\frac{m(m+1)}{2}}$. By Eq. (\ref{ConB.eq1}), $u_i$ can be written as 
\[\begin{split}
 u_i = & (q-1)^{-1}\omega^{(t-1)(m-q)+\frac{m(m+1)}{2}-(q+1)[(q-2)i+i_0]}\\
     = & \lambda^{-1} \omega^{(t-1)(m-q)-(q+1)[(q-2)i+i_0]}
\end{split} \]
for some integer $i_0$. Let $h(x)=x^{q-m}$, then according to the form of $a_i$, we have   
\[\begin{split}
    \lambda u_ih(a_i) = & \omega^{(t-1)(m-q)-(q+1)[(q-2)i+i_0]}\cdot \omega^{(t-1)(q-m)+(q+1)(q-m)i}\\
    = & \omega^{-(q+1)[(m-2)i+i_0]}.\\
\end{split} \]
It is obviously that  $\lambda u_ih(a_i)\in \F_q^*$ and there exists $v_i\in \F_{q^2}^*$ such that $v_i^{q+1}=\lambda u_ih(a_i)$ for each $1\leq i\leq n$. 
Set $\mathbf{v}=(v_1,v_2,\dots,v_n)$.

Note that $\lfloor \frac{n+m}{q+1} \rfloor=\lfloor \frac{m(q+1)-m}{q+1}\rfloor=m-1\geq 1$ for $2\leq m\leq q$. Therefore, for $1\leq k\leq \lfloor \frac{n+m}{q+1} \rfloor=m-1$, 
we can consider the extended GRS code $\GRS_k(\mathbf{a},\mathbf{v},\infty)$ associated to $\mathbf{a}$ and $\mathbf{v}$, where $\mathbf{a}$ and $\mathbf{v}$ are defined 
as above. Furthermore, it is easy to see that 
\[\begin{split}
  \deg(h(x))=q-m\leq q+n-(q+1)k 
\end{split} \]
and the equation holds if and only if $k=m-\frac{m}{q+1}$, which contradicts to the fact that $k$ is an integer. It follows that 
$\deg(h(x))< q+n-(q+1)k$, thus, according to Lemma \ref{lem.Hermitian almost self-orthogonal EGRS} and Definition \ref{def.Hermitian almost self-orthogonal}, 
$\GRS_k(\mathbf{a},\mathbf{v},\infty)$ is an $[m(q-1)+1,k]_{q^2}$ Hermitian almost self-orthogonal MDS code.

$\textbf{Case 2:}$ When $q$ is odd, set $\lambda = (q-1)\omega^{\frac{(q-m+1)(m-1)}{2}}$. 
Taking a similar manner to $\textbf{Case 1}$ above, let $h(x)=x^{q-m}$ again, then we can 
also obtain an $[m(q-1)+1,k]_{q^2}$ Hermitian almost self-orthogonal MDS code, where $1\leq k\leq m-1$. 

Combining $\textbf{Case 1}$ and $\textbf{Case 2}$, the desired result follows.
\end{proof}

\begin{remark}\label{rem1.lambda isnot equal to -1}
  One of the most critical steps in the proof of Theorem \ref{th.Hermitian almost self-orthogonal.ConA.} 
  is to show that $\deg(h(x))<q+n-(q+1)k$. In fact, by Lemma \ref{lem.Hermitian almost self-orthogonal EGRS}, we can also prove $\lambda\neq -1$. 
  And a simple calculation shows that $q\geq 3$ is required if we finish the proof of Theorem \ref{th.Hermitian almost self-orthogonal.ConA.}
  in this way. Therefore, this specifical example shows us that the new method introduced in Lemma \ref{lem.Hermitian almost self-orthogonal EGRS} 
  is flexible for using and that it may produce a small difference when we use it from different views.  
\end{remark}

Applying Corollary \ref{coro.Hermitian hulls from Hermitian almost self-orthogonal MDS codes}, we can draw the following result.

\begin{corollary}\label{coro.1.flexible Hermitian hulls from 1.Hermitian almost self-orthogonal}
  Let $q$ be a prime power and $2\leq m\leq q$. Then for $1\leq k\leq m-1$, there exists an $[m(q-1)+1,k]_{q^2}$ 
  MDS code with $l$-dimensional Hermitian hull, where $0\leq l\leq k-1$.  
\end{corollary}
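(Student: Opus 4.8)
The plan is to obtain this corollary as an immediate consequence of two building blocks already established in this subsection: Theorem~\ref{th.Hermitian almost self-orthogonal.ConA.}, which furnishes a Hermitian almost self-orthogonal MDS code of exactly the required length and dimension, and Corollary~\ref{coro.Hermitian hulls from Hermitian almost self-orthogonal MDS codes}, which upgrades any such code into a whole family of MDS codes realizing every smaller Hermitian hull dimension. No fresh computation with the locators $a_i$, the multipliers $v_i$, or the quantities $u_i$ is needed; all of that work has been front-loaded into the earlier results, so the argument here is pure composition.

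First I would fix a prime power $q$ with $2\le m\le q$ and an integer $k$ with $1\le k\le m-1$. By Theorem~\ref{th.Hermitian almost self-orthogonal.ConA.}, there exists an $[m(q-1)+1,k]_{q^2}$ Hermitian almost self-orthogonal MDS code $\C$. The key step is then simply to unwind the terminology: by Definition~\ref{def.Hermitian almost self-orthogonal}, ``Hermitian almost self-orthogonal'' means precisely that $\dim(\Hull_H(\C))=k-1$. Hence $\C$ is an $[m(q-1)+1,k]_{q^2}$ MDS code with $(k-1)$-dimensional Hermitian hull, which is exactly the input format required by Corollary~\ref{coro.Hermitian hulls from Hermitian almost self-orthogonal MDS codes}.

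Next I would feed this $\C$ into Corollary~\ref{coro.Hermitian hulls from Hermitian almost self-orthogonal MDS codes} with $n=m(q-1)+1$. That corollary then guarantees, for every integer $l$ with $0\le l\le k-1$, the existence of an $[m(q-1)+1,k]_{q^2}$ MDS code whose Hermitian hull has dimension exactly $l$. This is verbatim the assertion to be proved, so the proof would be complete.

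Because every ingredient is already in place, there is essentially no genuine obstacle in the argument itself; the only points deserving care are bookkeeping. I would verify that the hypotheses of Corollary~\ref{coro.Hermitian hulls from Hermitian almost self-orthogonal MDS codes} are met, namely that the code from Theorem~\ref{th.Hermitian almost self-orthogonal.ConA.} is genuinely MDS and has hull dimension \emph{equal} to $k-1$ rather than merely at most $k-1$ (both are supplied by the theorem statement together with Definition~\ref{def.Hermitian almost self-orthogonal}), and I would carry the range constraint $2\le m\le q$ through unchanged, since it is inherited directly from Theorem~\ref{th.Hermitian almost self-orthogonal.ConA.}.
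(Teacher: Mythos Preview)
Your proposal is correct and matches the paper's own approach exactly: the paper states the corollary immediately after Theorem~\ref{th.Hermitian almost self-orthogonal.ConA.} with the one-line justification ``Applying Corollary~\ref{coro.Hermitian hulls from Hermitian almost self-orthogonal MDS codes}, we can draw the following result,'' which is precisely the composition you describe.
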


Note that all MDS codes in  Corollary \ref{coro.1.flexible Hermitian hulls from 1.Hermitian almost self-orthogonal} have type $[m(q-1)+1,k]_{q^2}$. 
As far as we know, for a similar form of length, there are some constructions of Hermitian self-orthogonal MDS codes in previous studies. 
We list them in Table \ref{tab:know Hermitian self-orthogonal MDS codes}.
According to Lemma \ref{lem.Hermitian hulls from l-dim Hermitian hull codes}, MDS codes with $l$-dimensional Hermitian hull for each 
$0\leq l\leq k$ can be derived from these Hermitian self-orthogonal MDS codes. Therefore, it is necessary to compare these results with our construction. 
In Remark \ref{rem1.comparing Hermitian self-orthogonal}, we discuss their differences in detail.

\begin{table}[!htb]
  \caption{Some known Hermitian self-orthogonal MDS codes}
  \label{tab:know Hermitian self-orthogonal MDS codes}       
  \begin{small}
  \begin{center}
    \begin{tabular}{ccccc}
      \hline
       Class & $q$ & $n$ & $k$ & Ref.\\ 
      \hline 

      1 & $q\geq 3$ & \tabincell{c}{$n=r\frac{q^2-1}{2s+1}+1$,\\ $(2s+1)\mid (q+1)$, $1\leq r\leq 2s+1$} & $1\leq k\leq (s+1)\frac{q+1}{2s+1}-1$ & \cite[]{RefJB1}\\
       
      2 & $q\geq 3$ & \tabincell{c}{$n=r\frac{q^2-1}{2s}$+1,\\ $2s\mid (q+1)$, $2\leq r\leq 2s$} & $1\leq k\leq (s+1)\frac{q+1}{2s}-1$ & \cite[]{RefJB1} \\

      3 & $q\geq 3$ & \tabincell{c}{$n=tn'+1$, $1\leq t\leq \frac{q-1}{n_1}$,\\ $n_1=\frac{n'}{\gcd(n',q+1)}$, $n'\mid (q^2-1)$} & $1\leq k\leq \lfloor \frac{n+q}{q+1} \rfloor$ & \cite[]{RefJ7} \\

      4 & \tabincell{c}{$0\leq r\leq q$,\\ $q+1\equiv r({\rm{mod}}\ 2r)$} & $n=r(q-1)+1$ & $k\leq \frac{q-1+r}{2}$ & \cite[]{RefJB3} \\

      5 & \tabincell{c}{$t\geq 1$, $q\equiv -1({\rm{mod}}\ 2t+1)$} & \tabincell{c}{$n=1+\frac{r(q^2-1)}{2t+1}$,\\ $0\leq r\leq 2t+1$, $\gcd(r,q)=1$} & $k\leq \frac{t+1}{2t+1}\cdot q-\frac{t}{2t+1}$ & \cite[]{RefJB4} \\

      6 & Arbitrary & \tabincell{c}{$n=\frac{q^2-1}{t}+1$,\\ $t=2w+1$, $t\mid (q+1)$} & $2\leq k\leq \frac{t+1}{2t}(q-1)$  & \cite[]{RefJB5} \\

      \hline 
    \end{tabular}
  \end{center}
\end{small}
  \end{table}

\begin{remark}\label{rem1.comparing Hermitian self-orthogonal}
  From the perspective of length and dimension, we can easily illustrate that the MDS codes with Hermitian hulls of arbitrary dimensions constructed by 
  Theorem \ref{th.Hermitian almost self-orthogonal.ConA.} are new in general. Recall that our MDS codes have type $[m(q-1)+1,k]_{q^2}$, where $2\leq m\leq q$ and $1\leq k\leq m-1$. 
  \begin{enumerate}
    \item [{\rm (1)}] \textbf{Comparisons of lengths:}
    \begin{itemize}
      \item For Class $3$:\\
      Taking $n'=q-1$, if $q$ is even, then $\gcd(q-1,q+1)=1$ and $n_1=q-1$; if $q$ is odd, then $\gcd(q-1,q+1)=2$ and 
      $n_1=\frac{q-1}{2}$. Hence, Class $3$ can generate 
      $[t(q-1)+1,k]_{q^2}$ Hermitian self-orthogonal MDS codes, where $q$ is even and $t=1$; 
      $[t(q-1)+1,k]_{q^2}$ Hermitian self-orthogonal MDS codes, where $q$ is odd and $1\leq t\leq 2$. 
      However, $2\leq m\leq q$, thus, most lengths in our construction are new. 
      \item For Classes $4$ and $5$:\\
      Note that more conditions are required for $q$ or $r$, which confirms our assertion. 
      \item For Class $6$:\\
      Rewrite $n=\frac{q+1}{t} \cdot (q-1)+1$, where $t=2w+1$ is odd and $t\mid (q+1)$. Clearly, the value set of 
      $\frac{q+1}{t}$ is narrow relative to the value set of $m$ in our construction. In fact, all even $t$ and 
      $t\nmid (q+1)$ are not suitable to Class $6$.
    \end{itemize} 

      \item [{\rm (2)}] \textbf{Comparisons of dimensions:}
      \begin{itemize}
      \item For Classes $1$ and $2$:\\
      Take $m=\frac{r(q+1)}{2s+1}$ and $\frac{r(q+1)}{2s}$ in Class $1$ and Class $2$, respectively. Then the corresponding 
      ranges of the dimension in our construction should be $1\leq k\leq \frac{r(q+1)}{2s+1}-1$ and $\frac{r(q+1)}{2s}-1$. 
      Considering the ranges of $r$, almost half of our MDS codes can take larger dimensions in general. 
    \end{itemize}
    Of course, the restrictions of $2(s+1)\mid (q+1)$ and $2s\mid (q+1)$ may also limit the lengths of Classes $1$ and $2$ in some cases. 
    For example, MDS codes of lengths $25,\ 41,\ 57$ and $73$ over $\F_{3^4}$ can not be constructed from Class $1$ and all possible MDS codes 
    over $\F_q$ with even $q$ can not be constructed from Class $2$.
  \end{enumerate}
\end{remark}

\begin{remark}\label{rem.conB.length>q+1}
  The length of MDS codes in Corollary \ref{coro.Hermitian hulls from Hermitian almost self-orthogonal MDS codes} has form $m(q-1)+1$. 
  On one hand, since $2\leq m\leq q$, then the length of MDS codes are always $m(q-1)+1\geq 2(q-1)+1\geq q+1$. And only when $q=2$, 
  the equation holds.
  On the other hand, $m(q-1)+1\leq q(q-1)+1=q^2-q+1$. According to Remark \ref{rem1.conA.comparion}, the length of MDS codes from 
  Theorems \ref{th.ConA.1}, \ref{th.ConA.2} and \ref{th.ConA.3} is at least $q^2-q+1$. Hence, almost all MDS codes from Corollary 
  \ref{coro.Hermitian hulls from Hermitian almost self-orthogonal MDS codes} are new with respect to MDS codes from  
  Theorems \ref{th.ConA.1}, \ref{th.ConA.2} and \ref{th.ConA.3}. 
\end{remark}

\begin{example}\label{exam1.Hermitian almost self-orthogonal MDS codes}
  For some different $q$, we give some concrete examples of Hermitian almost self-orthogonal MDS codes from 
  Theorem \ref{th.Hermitian almost self-orthogonal.ConA.}. For clarity, we list them in Table \ref{tab:2}. 
\end{example}

\begin{table}[!htb]
\caption{Some examples of new Hermitian almost self-orthogonal MDS codes from Theorem \ref{th.Hermitian almost self-orthogonal.ConA.}}
\label{tab:2}       
\begin{center}
	\begin{tabular}{cccc|cccc}
		\hline
	   $q$ & $m$ & $n$ & $k$ & $q$ & $m$ & $n$ & $k$\\
		\hline
    4 & 3 & 10 & [1,2] & 4 & 4 & 13 & [1,3]\\

    8 & 5 & 36 & [1,4] & 8 & 6 & 43 & [1,5]\\
    8 & 7 & 50 & [1,6] & 8 & 8 & 57 & [1,7]\\

    25 & 17 & 409 & [1,16] & 25 & 18 & 433 & [1,17]\\
    25 & 21 & 505 & [1,20] & 25 & 22 & 529 & [1,21]\\

    27 & 11 & 287 & [1,10] & 27 & 12 & 313 & [1,11]\\
    27 & 23 & 599 & [1,22] & 27 & 24 & 625 & [1,23]\\
		\hline
	\end{tabular}
\end{center}
\end{table}

\subsection{Construction C for MDS codes with Euclidean hulls of flexible dimensions}\label{constructionC}
In this construction, we consider the Euclidean inner product over $\F_{q}$. Note that $(q-1)\nmid 2$ for any $q>3$, then there exists $v_i\in \F_{q}^*$ 
such that $v_i^2\neq 1$. By similar ways to Theorems \ref{th.ConA.1}, \ref{th.ConA.2} and \ref{th.ConA.3}, the dimension of the Euclidean hull of 
corresponding MDS codes can be determined. We present these results in Theorem \ref{th.ConA_form of Euclidean} and omit the proof. 

\begin{theorem}\label{th.ConA_form of Euclidean}
  Let $q=p^m>3$. The following statements hold.
  \begin{enumerate}
    \item [(1)] For $1\leq k\leq \lfloor \frac{q}{2} \rfloor$, if $q-k\leq n\leq q$, then there exists an $[n,k]_q$ MDS code with $l$-dimensional Euclidean hull, where $0\leq l\leq n+k-q$. 
    \item [(2)] For $1\leq k\leq \lfloor \frac{q}{2} \rfloor$, if $q-k+1\leq n\leq q$, then there exists an $[n+1,k]_q$ MDS code with $l$-dimensional Euclidean hull, where $0\leq l\leq n+k-q-1$.
    \item [(3)] For odd $q$ and any $\frac{q-1}{2}\leq n\leq q$, then there exists an $[n+1,\frac{q+1}{2}]_q$ MDS code with $l$-dimensional Euclidean hull, where $0\leq l\leq n-\frac{q-1}{2}$.  
  \end{enumerate}
\end{theorem}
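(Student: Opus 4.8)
The plan is to transpose the three Hermitian constructions of Theorems \ref{th.ConA.1}, \ref{th.ConA.2} and \ref{th.ConA.3} into the Euclidean setting over $\F_q$, using Lemma \ref{lem.Euclidean hulls} in place of Lemma \ref{lem.Hermitian hulls}. Write $\F_q=\{a_1,\dots,a_q\}$, so that $\prod_{x\in\F_q^*}x=-1$ yields the Euclidean analogue $u_i=\prod_{1\le j\le n,\,j\ne i}(a_i-a_j)^{-1}=-\prod_{j=n+1}^{q}(a_i-a_j)$ of Eq. (\ref{eq.ui over Fq2}). The only substantive changes relative to the Hermitian proofs are that the twisted evaluations $v_i^{q+1}f^q(a_i)$ become $v_i^2f(a_i)$ (in particular the Frobenius disappears, so $f$ keeps degree $\le k-1$ rather than $q(k-1)$), and the nonresidue condition $v_i^{q+1}\ne1$ becomes $v_i^2\ne1$, realizable whenever $q>3$ as noted before the statement.

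For part (1), I would set $s=n+k-q-l$ and take $\C=\GRS_k(\mathbf{a},\mathbf{v})$ with $\mathbf{v}=(v_1,\dots,v_s,1,\dots,1)$, $v_i^2\ne1$ for $1\le i\le s$. Applying Lemma \ref{lem.Euclidean hulls}(1) to a hull codeword gives a $g$ with $\deg g\le n-k-1$ and a coordinatewise identity. From its last $n-s$ coordinates and the formula for $u_i$ one gets $f(a_i)=-\prod_{j=n+1}^{q}(a_i-a_j)g(a_i)$ for $s+1\le i\le n$; since $\deg f\le k-1$ and $\deg\!\big(\prod_{j=n+1}^{q}(x-a_j)g(x)\big)\le q-k-1$ are both strictly smaller than $n-s=q-k+l$ (here $\deg f<n-s$ uses exactly $2k\le q$, i.e.\ $k\le\lfloor q/2\rfloor$, and the other bound uses $l\ge0$), these two polynomials agree, so $\prod_{j=n+1}^{q}(x-a_j)\mid f(x)$. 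Feeding this back into the first $s$ coordinates and using $v_i^2\ne1$ forces $f(a_i)=0$ for $1\le i\le s$, whence $f=h\prod_{j=n+1}^{q}(x-a_j)\prod_{i=1}^{s}(x-a_i)$ with $\deg h\le l-1$, giving $\dim\Hull_E(\C)\le l$; the reverse inequality follows by taking such an $f$, setting $g=-\prod_{j=n+1}^{q}(x-a_j)^{-1}f$, and checking $\deg g\le n-k-1$ (again $2k\le q$). Part (2) is the extended-code version: take $\C=\GRS_k(\mathbf{a},\mathbf{v},\infty)$ with $s=n+k-q-l-1$, apply Lemma \ref{lem.Euclidean hulls}(2), and argue identically to obtain $f=-\prod_{j=n+1}^{q}(x-a_j)g$; the one extra step is then to rule out $f_{k-1}\ne0$, which would force $g_{n-k}\ne0$ together with $\deg f=k-1$ and hence $k\ge(q+1)/2$, contradicting $k\le\lfloor q/2\rfloor$.

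Part (3) is the delicate boundary case $k=\frac{q+1}{2}$ (odd $q$), and here I would imitate Theorem \ref{th.ConA.3} rather than Theorem \ref{th.ConA.2}. With $s=n+k-q-l$ the degrees of both $f$ and $\prod_{j=n+1}^{q}(x-a_j)g$ can reach $q-k=\frac{q-1}{2}$, so the contradiction used in part (2) disappears and $f_{k-1}$ need not vanish. The resolution is that the last coordinate of Lemma \ref{lem.Euclidean hulls}(2) reads $f_{k-1}=-g_{n-k}$, and since $\prod_{j=n+1}^{q}(x-a_j)$ is monic, the leading coefficients of $f(x)$ and of $\prod_{j=n+1}^{q}(x-a_j)g(x)$ at degree $\frac{q-1}{2}$ are $f_{k-1}$ and $g_{n-k}$; these cancel, so $f(x)+\prod_{j=n+1}^{q}(x-a_j)g(x)$ has degree $\le\frac{q-3}{2}<n-s=\frac{q-1}{2}+l$ while vanishing at $n-s$ points, hence is zero. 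The factorization, the dimension count $\dim\Hull_E(\C)=l$, and the converse (where $f_{k-1}=-g_{n-k}$ is verified from the common leading coefficient of $h$) then go through as before.

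The main obstacle is entirely contained in these degree inequalities: the hypothesis $k\le\lfloor q/2\rfloor$ in (1)--(2) is precisely what keeps $\deg f=k-1$ below the number of forced roots and what produces the contradiction forcing $f_{k-1}=0$, while part (3) requires recognizing that at the threshold $k=\frac{q+1}{2}$ this mechanism fails and must be replaced by the leading-term cancellation argument. Everything else is the routine bookkeeping already carried out in Theorems \ref{th.ConA.1}, \ref{th.ConA.2} and \ref{th.ConA.3}, which is why the paper omits the proof.
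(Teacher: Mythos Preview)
Your proposal is correct and is precisely the route the paper has in mind: the paper explicitly omits the proof, saying only that it is obtained ``by similar ways to Theorems \ref{th.ConA.1}, \ref{th.ConA.2} and \ref{th.ConA.3}'' with Lemma \ref{lem.Euclidean hulls} replacing Lemma \ref{lem.Hermitian hulls}. Your degree bookkeeping (in particular the use of $2k\le q$ in parts (1)--(2) and the leading-term cancellation $f_{k-1}+g_{n-k}=0$ at the threshold $k=\tfrac{q+1}{2}$ in part (3)) is exactly the Euclidean translation of those arguments.
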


Besides, another two classes of MDS codes with Euclidean hulls of flexible dimensions can be exactly construted. To this end, we need to introduce a fact. 
Denote $\F_q=\{a_1,a_2,\cdots,a_n,a_{n+1},\cdots,a_q\}$. Clearly, these $q$ elements are distinct. With similar reasoning as Eq. (\ref{eq.ui over Fq2}), 
we can further express $u_i$ as 
\begin{align}\label{eq.ui over Fq}
  u_i=\prod_{1\leq j\leq n, j\neq i}=(a_i-a_j)^{-1}=-\prod_{j=n+1}^{q}(a_i-a_j)
\end{align}
over $\F_q$. 

\begin{theorem}\label{th.ConC}
  Let $q=p^m$ be a prime power. The following statements hold.
  \begin{enumerate}
    \item [{\rm (1)}]  For $1\leq k\leq \lfloor \frac{q}{2} \rfloor$, if $\lceil \frac{q}{2} \rceil \leq n\leq \min\{q-k, \lceil \frac{q}{2} \rceil+k\}$, 
    then there exists an $[n,k]$ MDS code with $l$-dimensional Euclidean hull, where $0\leq l\leq n- \lceil \frac{q}{2} \rceil$.

    \item [{\rm (2)}] For $1\leq k\leq \lfloor \frac{q+1}{2} \rfloor$, if $\lceil \frac{q+1}{2} \rceil \leq n\leq \min\{q-k+1, \lceil \frac{q+1}{2} \rceil+k-1\}$, 
    then there exists an $[n+1,k]$ MDS code with $l$-dimensional Euclidean hull, where $0\leq l\leq n- \lceil \frac{q+1}{2} \rceil$ and $l\neq n-\frac{q+1}{2}$.

    \item [{\rm (3)}] If $q$ is odd, for $1\leq k\leq \frac{q+1}{2}$, if $\frac{q+1}{2} \leq n\leq \min\{q-k+1,  \frac{q+1}{2}+k-1\}$, 
    then there exists an $[n+1,k]$ MDS code with $(n-\frac{q+1}{2}+1)$-dimensional Euclidean hull.
  \end{enumerate}
  \end{theorem}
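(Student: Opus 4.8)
The plan is to handle the three parts uniformly through Lemma~\ref{lem.Euclidean hulls}, feeding in the explicit shape of $u_i$ recorded in Eq.~(\ref{eq.ui over Fq}), and then to isolate the one genuinely new phenomenon — a leading-coefficient (parity) obstruction in the extended case — which is what separates parts (2) and (3). Writing $\F_q=\{a_1,\dots,a_n,a_{n+1},\dots,a_q\}$ and $W(x)=\prod_{j=n+1}^{q}(x-a_j)$, so that $u_i=-W(a_i)$ and $\deg W=q-n$, I would first build the locators $\mathbf{a}$ from a union of cosets of a suitable multiplicative subgroup of $\F_q^*$ (in the spirit of the decomposition behind Lemma~\ref{ConB.lemma_ui}), so that $u_i$ becomes a constant times a fixed power of $a_i$. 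This lets me prescribe the column multipliers through $v_i^2=\lambda\,u_i\,h(a_i)$ for a monomial $h(x)=x^{d}$ and a constant $\lambda\in\F_q^*$, with the exponent $d$ serving as the single knob that tunes the hull dimension. The existence of the required square values $v_i^2$ is automatic when $q$ is even (every element of $\F_q^*$ is a square), and when $q$ is odd it is exactly what the coset choice of $\mathbf{a}$ is arranged to guarantee.

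For part (1), applying Lemma~\ref{lem.Euclidean hulls}(1) to $\C=\GRS_k(\mathbf{a},\mathbf{v})$ and cancelling $u_i\neq0$ collapses the hull condition $v_i^2 f(a_i)=u_i g(a_i)$ to the interpolation identity $g(a_i)=\lambda\,h(a_i)f(a_i)$ for $1\le i\le n$. Since $\deg f\le k-1<n$, $\deg g\le n-k-1<n$ and $\deg(\lambda h f)=d+\deg f<n$ for every admissible $d$, agreement at the $n$ distinct points $a_i$ forces the polynomial equality $g=\lambda h f$; this candidate is a legitimate $g$ precisely when $\deg(hf)\le n-k-1$, i.e. $\deg f\le n-k-1-d$. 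Counting the $f$ that survive this degree cap yields $\dim(\Hull_E(\C))=n-k-d$ (as long as this does not exceed $k$), and choosing $d=n-k-l$ realizes each prescribed $l$. The hypotheses $n\ge\lceil q/2\rceil$ and $n\le\min\{q-k,\lceil q/2\rceil+k\}$ are exactly what keep $\deg W=q-n$ small enough relative to $\deg L=n$, the admissible exponents $d$ nonnegative, and the target dimension inside $[0,\,n-\lceil q/2\rceil]$; pinning the threshold $\lceil q/2\rceil$ down precisely is part of this bookkeeping.

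For parts (2) and (3) I would rerun the argument with Lemma~\ref{lem.Euclidean hulls}(2) for the extended code $\GRS_k(\mathbf{a},\mathbf{v},\infty)$, which appends the scalar relation $f_{k-1}=-g_{n-k}$ to the system. The affine coordinates still give $g=\lambda h f$ and the same degree count, but one must now reconcile the top coefficients: when $\deg(hf)$ reaches the extremal value $n-k$, the coefficient $g_{n-k}$ equals $\lambda f_{k-1}$ (up to the leading coefficient of $h$), so $f_{k-1}=-g_{n-k}$ can hold with $f_{k-1}\neq0$ if and only if a $\lambda=-1$ type equality occurs — exactly the dichotomy already met in Lemma~\ref{lem.Hermitian almost self-orthogonal EGRS} and Corollary~\ref{coro.Hermitian self-orthogonal EGRS}, and the analogue of the special case in Theorem~\ref{th.ConA.3}. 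For the generic choice this extremal codeword is forced out of the hull, which is why the value $l=n-\tfrac{q+1}{2}$ must be excluded in part (2); and it is precisely the complementary, finely tuned choice of $\lambda$ and of the leading coefficient of $h$ that admits this one extra codeword, producing the exact dimension $n-\tfrac{q+1}{2}+1$ of part (3). The shift from $\lceil q/2\rceil$ to $\lceil\tfrac{q+1}{2}\rceil$ simply reflects that the extended code is supported on the $q+1$ points $\F_q\cup\{\infty\}$ rather than on $\F_q$.

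I expect the main obstacle to be exactly this leading-coefficient analysis in the extended case: tracking the highest-degree terms of $f$, $g$ and $h$ carefully enough to decide, for each admissible $d$ and each parity of $q$, whether the extremal codeword lies in the hull. Everything else reduces to degree bookkeeping, but this step is what distinguishes the missing dimension $l=n-\tfrac{q+1}{2}$ of part (2) from the extra dimension secured in part (3), and it must be carried out in the precise manner of Lemma~\ref{lem.Hermitian almost self-orthogonal EGRS} rather than by a soft dimension count.
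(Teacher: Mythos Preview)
Your plan diverges substantially from the paper's construction and, as written, has a real gap. The paper uses no multiplicative coset structure on the locators: it takes $a_1,\dots,a_q$ to be an arbitrary enumeration of $\F_q$ and, for the parameter $s=q-n-k+l$, sets
\[
v_i \;=\; \prod_{j=n+1}^{n+s}(a_i-a_j),\qquad 1\le i\le n,
\]
which is automatically in $\F_q^*$ --- no square root is ever extracted. Since $u_i=-\prod_{j=n+1}^{q}(a_i-a_j)$ by Eq.~(\ref{eq.ui over Fq}), one cancellation turns the hull equation $v_i^2 f(a_i)=u_ig(a_i)$ into
\[
\prod_{j=n+1}^{n+s}(a_i-a_j)\,f(a_i)\;=\;-\prod_{j=n+s+1}^{q}(a_i-a_j)\,g(a_i),
\]
and the hypotheses on $n$ are precisely what force both sides to have degree $\le n-1$; the resulting polynomial identity, together with coprimality of the two products, gives $\prod_{j=n+s+1}^{q}(x-a_j)\mid f(x)$, and counting the surviving $h$ yields $\dim(\Hull_E)=l$. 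Parts (2)--(3) reuse the same $v_i$ with $s=q-n-k+l+1$: comparing leading coefficients in the identity shows that $f_{k-1}=-g_{n-k}$ is automatic exactly when $2s=q-2k+1$, i.e.\ when $l=n-\tfrac{q+1}{2}$, which is the split between (2) and (3) that you correctly anticipated.

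In your scheme, once you set $v_i^2=\lambda u_i h(a_i)$ the factor $u_i$ cancels from the hull equation entirely, so $W$ never enters your degree count; the sentence attributing the $n$-constraints to ``keeping $\deg W$ small'' therefore cannot be the source of those bounds in your own framework. More seriously, choosing $\mathbf a$ as a union of cosets of a nontrivial subgroup of $\F_q^*$ (the only way to make $u_i$ a constant times a monomial in $a_i$, in the spirit of Lemma~\ref{ConB.lemma_ui}) forces $n$ to be a multiple of the subgroup order, whereas the theorem demands every integer $n$ in an interval of consecutive values; and for odd $q$ you have not verified that $\lambda u_i a_i^{d}$ is a square simultaneously for all $i$ and for each exponent $d$ you need. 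The paper's device of taking $v_i$ itself as an explicit product of linear factors sidesteps both difficulties at once.
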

  \begin{proof}
 (1) Let notations be the same as before. Denote $s=q-n-k+l$. 
 Take $\mathbf{a}=(a_1,a_2,\dots,a_{n})$ and $\mathbf{v}=(v_1,v_2,\dots,v_s,v_{s+1},\dots,v_n)$, where $v_i=\prod_{j=n+1}^{n+s}(a_i-a_{j})$ for any $1\leq i\leq n$.
  We now consider the Euclidean hull of the $q$-ary $[n,k]$ MDS code $\C=\GRS_k(\mathbf{a},\mathbf{v})$.

  For any codeword $$\boldsymbol{c}=(v_1f(a_1),v_2f(a_2),\dots,v_nf(a_n))\in \Hull_E(\C),$$
  by the result $(1)$ of Lemma \ref{lem.Euclidean hulls}, there exists a polynomial $g(x)\in \F_q[x]$ with $\deg(g(x))\leq n-k-1$ such that
  \[\begin{split}
  (v_1^2f(a_1),v_2^2f(a_2),\dots,v_n^2f(a_n))=(u_1g(a_1),u_2g(a_2),\dots,u_ng(a_n)),
  \end{split} \]
  which implies that $v_i^2f(a_i)=u_ig(a_i)$ for $1\leq i\leq n$. Since
  \[\begin{split}
  v_i^2f(a_i)=\prod_{j=n+1}^{n+s}(a_i-a_{j})^2f(a_i)
  \end{split} \]
  and by Eq. (\ref{eq.ui over Fq}),
  \[\begin{split}
  u_ig(a_i)=-\prod_{j=n+1}^{q}(a_i-a_j)g(a_i) ,
  \end{split} \]
  it follows that
  \[\begin{split}
  \prod_{j=n+1}^{n+s}(a_i-a_{j})f(a_i)=-\prod_{j=n+s+1}^{q}(a_i-a_j)g(a_i),
  \end{split} \]
  for $1\leq i\leq n$. Note that
  \[\begin{split}
  & \deg(\prod_{j=n+1}^{n+s}(x-a_{j})f(x))\leq s+k-1=q-n+l-1\leq n-1, \\
  & \deg(-\prod_{j=n+s+1}^{q}(x-a_j)g(x))\leq (q-n-s)+(n-k-1)=q-s-k-1=n-l-1\leq n-1.
  \end{split} \]
  Hence, we can derive that
  $$\prod_{j=n+1}^{n+s}(x-a_{j})f(x)=-\prod_{j=n+s+1}^{q}(x-a_j)g(x)$$
  from the fact $n-1<n$. Moreover, since $(\prod_{j=n+1}^{n+s}(x-a_{j}), \prod_{j=n+s+1}^{q}(x-a_{j}))=1$, we have
  $$\prod_{j=n+s+1}^{q}(x-a_j)\mid f(x).$$
  Therefore, $f(x)$ can be written as
  $$f(x)=h(x)\prod_{j=n+s+1}^{q}(x-a_{j}),$$
  where $\deg(h(x))\leq k-1-(q-n-s)=n-q+k+s-1$. It deduces that $\dim(\Hull_E(\C))\leq n-q+k+s$.

  Conversely, let $f(x)$ be a polynomial of form $h(x)\prod_{j=n+s+1}^{q}(x-a_{j})$, where $h(x)\in \F_q[x]$ and $\deg(h(x))\leq n-q+k+s-1$.
  Take
  $$g(x)=-\prod_{j=n+s+1}^{q}(x-a_{j})^{-1} \prod_{j=n+1}^{n+s}(x-a_{j})f(x)=-h(x)\prod_{j=n+1}^{n+s}(x-a_{j}),$$
  then $g(x)$ is a polynomial in $\F_q[x]$ with $\deg(g(x))\leq (n-q+k+s-1)+s\leq n-k-1$. Moreover, by Eq. (\ref{eq.ui over Fq}), we have 
  \[\begin{split}
  (v_1^2f(a_1),v_2^2f(a_2),\dots,v^2_nf(a_n))=(u_1g(a_1),u_2g(a_2),\dots,u_ng(a_n)).
  \end{split} \]
  According to the result (1) of Lemma \ref{lem.Euclidean hulls}, the vector
  $$(v_1f(a_1),v_2f(a_2),\dots,v_nf(a_n))\in \Hull_E(\C).$$
  It deduces that $\dim(\Hull_E(\C))\geq n-q+k+s$.

 In summary, we have $\dim(\Hull_E(\C))=n-q+k+s=l$. This completes the proof.

  (2) Denote $s=q-n-k+l+1$ and let other notations be the same as before. 
   We now consider the Euclidean hull of the $q$-ary $[n+1,k]$ MDS code $\C=\GRS_k(\mathbf{a},\mathbf{v},\infty)$.
  
   For any codeword $$\boldsymbol{c}=(v_1f(a_1),v_2f(a_2),\dots,v_nf(a_n),f_{k-1})\in \Hull_E(\C),$$ by the result (2) of Lemma \ref{lem.Euclidean hulls}, 
   there exists a polynomial $g(x)\in \F_q[x]$ with $\deg(g(x))\leq n-k$ such that
  \[\begin{split}
  (v_1^2f(a_1),v_1^2f(a_2),\dots,v_n^2f(a_n), f_{k-1})=(u_1g(a_1),u_2g(a_2),\dots,u_ng(a_n),-g_{n-k}), 
  \end{split} \]
Similar to the proof of (1) above, we have 
\[\begin{split}
  \prod_{j=n+1}^{n+s}(a_i-a_{j})f(a_i)=-\prod_{j=n+s+1}^{q}(a_i-a_j)g(a_i)
\end{split} \]
for $1\leq i\leq n$. Note that 
  \[\begin{split}
  & \deg(\prod_{j=n+1}^{n+s}(x-a_{j})f(x))\leq s+k-1=q-n+l\leq n-1, \\
  & \deg(-\prod_{j=n+s+1}^{q}(x-a_j)g(x))\leq (q-n-s)+(n-k)=q-s-k=n-l-1\leq n-1.
  \end{split} \]
  Hence, we can derive that
  $$\prod_{j=n+1}^{n+s}(x-a_{j})f(x)=-\prod_{j=n+s+1}^{q}(x-a_j)g(x)$$
  and $\prod_{j=n+s+1}^{q}(x-a_j)\mid f(x)$ for the same reasoning as (1) above. Now, we determine the value of $f_{k-1}$. 
  If $f_{k-1}\neq 0$, then $s+k-1=(q-n-s)+(n-k)$, which contradicts to $l\neq n-\frac{q+1}{2}$.  
  Thus $f_{k-1}=0$ and $\deg(f(x))\leq k-2$.    
  Therefore, $f(x)$ can be written as
  $$f(x)=h(x)\prod_{j=n+s+1}^{q}(x-a_{j}),$$
  where $\deg(h(x))\leq k-2-(q-n-s)=n-q+k+s-2$. It deduces that $\dim(\Hull_E(\C))\leq n-q+k+s-1$.
  
 Conversely, let $f(x)$ be a polynomial of form $h(x)\prod_{j=n+s+1}^{q}(x-a_{j})$, where $h(x)\in \F_q[x]$ and $\deg(h(x))\leq n-q+k+s-2$.
  Take
  $$g(x)=-\prod_{j=n+s+1}^{q}(x-a_{j})^{-1} \prod_{j=n+1}^{n+s}(x-a_{j})f(x)=-h(x)\prod_{j=n+1}^{n+s}(x-a_{j}),$$
  then $g(x)$ is a polynomial in $\F_q[x]$ with $\deg(g(x))\leq (n-q+k+s-2)+s\leq n-k-1 $. Moreover, by  Eq. (\ref{eq.ui over Fq}), we have 
  \[\begin{split}
  (v_1^2f(a_1),v_2^2f(a_2),\dots,v^2_nf(a_n),0)=(u_1g(a_1),u_2g(a_2),\dots,u_ng(a_n),0).
  \end{split} \]
  According  to the result (2) of Lemma \ref{lem.Euclidean hulls}, the vector
  $$(v_1f(a_1),v_2f(a_2),\dots,v_nf(a_n),0)\in \Hull_E(\C).$$
  It deduces that $\dim(\Hull_E(\C))\geq n-q+k+s-1$.
  
  In summary, we have $\dim(\Hull_E(\C))=n-q+k+s-1=l$. This completes the proof.

  (3) Let notations be the same as before. 
  We now consider the $q$-ary $[l+\frac{q+1}{2}+1,k]$ MDS code $\C=\GRS_k(\mathbf{a},\mathbf{v},\infty)$, where $q$ is odd. From the proof of (2) above, we can easily conclude that 
  the dimension of the Euclidean hull of $\C$ is $l+1$. In other words, there exists an $[n+1,k]_{q}$ MDS code with $(n-\frac{q+1}{2}+1)$-dimensional Euclidean hull. This completes the proof.
\end{proof}

  Taking $l=k$ and $l=1$ in Theorems \ref{th.ConA_form of Euclidean} and \ref{th.ConC}, we can obtain some Euclidean self-orthogonal and one-dimensional Euclidean hull MDS codes. 
  In particular, for MDS codes $\C$ with dimension $k=1$, it is easy to see that $\C$ is Euclidean self-orthogonal if and only if $\dim(\Hull_E(\C))=1$. Hence, we only consider the 
  MDS codes with dimension $k\geq 2$ in the following examples.

  \begin{example}\label{exam.Euclidean self-orthogonal}
    Taking $l=k$ in Theorem \ref{th.ConA_form of Euclidean}, we have the following Euclidean self-orthogonal MDS codes, which can also be obtained from \cite[]{RefJ6}.
    \begin{itemize}
      \item [{\rm (1)}] $[q,k]_q$ MDS codes, where $1\leq k \leq \lfloor \frac{q}{2} \rfloor$ and $q>3$;
      \item [{\rm (2)}] $[q+1,\frac{q+1}{2}]_q$ MDS codes, where $q>3$ is odd.
    \end{itemize}
    Taking $l=k$ in Theorem \ref{th.ConC}, we have the following Euclidean self-orthogonal MDS codes. As far as we know, they are new. 
    \begin{itemize}
      \item [{\rm (3)}] $[\lceil \frac{q}{2} \rceil+k,k]_q$ MDS codes, where $1\leq k \leq \lfloor \frac{q}{4} \rfloor$ and $q\geq 4$;
      \item [{\rm (4)}] $[k+\frac{q+1}{2},k]_q$ MDS codes, where $1\leq k\leq \lfloor \frac{q+3}{4} \rfloor$ and $q\geq 3$ is odd.
    \end{itemize}

  \end{example}
  
  \begin{example}\label{exam.Euclidean one-dimensional hull}
  Let $q=p^m$ be a prime power, then the dimension of Euclidean hulls of the following MDS codes is at most $1$. 
  \begin{itemize}
   \item[{\rm (1)}] $[q+1-k,k]_q$ MDS codes, where $2\leq k\leq \lfloor \frac{q}{2} \rfloor$ and $q\geq 4$;
   \item[{\rm (2)}] $[q+3-k,k]_q$ MDS codes, where $2\leq k\leq \lfloor \frac{q}{2} \rfloor$ and $q\geq 4$;
   \item[{\rm (3)}] $[\lceil \frac{q}{2} \rceil+1,k]_q$ MDS codes, where $2\leq k\leq \lfloor \frac{q}{2} \rfloor-1$ and $q\geq 6$;
   \item[{\rm (4)}] $[\lceil \frac{q+1}{2} \rceil+2,k]_q$ MDS codes, where $2\leq k\leq \lfloor \frac{q+1}{2} \rfloor-1$ and $q\geq 6$ is even;
   \item[{\rm (5)}] $[\frac{q+3}{2},k]_q$ MDS codes, where $2\leq k\leq  \frac{q+1}{2}$ and $q\geq 3$ is odd;
  \end{itemize}
  Moreover, since $\dim(\Hull_E(\C))=\dim(\Hull_E(\C^{\bot_E}))$, we can deduce that all Euclidean dual codes of MDS codes listed in $(1)-(5)$ are also 
  one-dimensional Euclidean hull MDS codes.
  \end{example}

\section{Application to EAQECCs} \label{sec-application}
\subsection{Quantum codes}
In this subsection, we introduce some basic notions about quantum codes. Let $\mathbb{C}$ be the complex field and $\mathbb{C}^q$ be the $q$-dimensional 
Hilbert space over $\mathbb{C}$. A qubit is actually a non-zero vector of $\mathbb{C}^q$. Denote a basis of $\mathbb{C}^q$ by $\{|a\rangle:\ a\in \F_q \}$, 
then a qubit $|v\rangle$ can be written as 
$$|v\rangle =\sum_{a\in \F_q}v_a|a\rangle,$$ where $v_a\in \mathbb{C}$.
 
Let $(\mathbb{C})^{\otimes n}(\cong \mathbb{C}^{q^n})$ be the $q^n$-dimensional Hilbert space over $\mathbb{C}$. Similar to classical linear codes, 
a $q$-ary quantum code $\mathcal{Q}$ of length $n$ is a subspace of $\mathbb{C}^{q^n}$. 
Let $\{|\mathbf{a}\rangle=|a_1\rangle\otimes \dots \otimes |a_n\rangle:\ (a_1,a_2, \dots, a_n)\in \F_q^n \}$ be a basis of $\mathbb{C}^{q^n}$. 
Similarly, an $n$-qubit is a joint state of $n$ qubits in $\mathbb{C}^{q^n}$ and can be written as 
$$|\mathbf{v}\rangle =\sum_{\mathbf{a}\in \F_q}v_\mathbf{a}|\mathbf{a}\rangle,$$
where $v_{\mathbf{a}}\in \mathbb{C}$.
The Hermitian inner product of any two $n$-qubits $|\mathbf{u}\rangle =\sum_{\mathbf{a}\in \F_q}u_\mathbf{a}|\mathbf{a}\rangle$
and $|\mathbf{v}\rangle =\sum_{\mathbf{a}\in \F_q}v_\mathbf{a}|\mathbf{a}\rangle$ is defined by 
$$\langle \mathbf{u}|\mathbf{v} \rangle= \sum_{\mathbf{a}\in\F_q^n}u_{\mathbf{a}} {\bar{v_\mathbf{a}}}\in \C,$$
where $\bar{v_{\mathbf{a}}}$ is the complex conjugate of $v_\mathbf{a}$.
$|\mathbf{u}\rangle$ and $|\mathbf{v}\rangle$ are said to be orthogonal if $\langle \mathbf{u}|\mathbf{v} \rangle=0$.

Let $\zeta_p$ be a complex primitive $p$-th root of unity.
The actions (rules) of $X(\mathbf{a})$ and $Z(\mathbf{b})$ on $|\mathbf{v}\rangle\in \mathbb{C}^{q^n} (\mathbf{v}\in \F_q^n)$ are depicted as   
$$X(\mathbf{a})| \mathbf{v}\rangle=|\mathbf{v}+\mathbf{a}\rangle\quad {\rm{and}} \quad Z(\mathbf{b})|\mathbf{v}\rangle =\zeta_p^{tr(\langle\mathbf{v},\mathbf{b}\rangle_E)}|\mathbf{v},$$
respectively, where $tr(\cdot)$ is the trace function from $\F_q$ to $\F_p$.
In a quantum system, the quantum errors are some unitary operators.
Denote the error group by $G_n$, then  
$$G_n=\{\zeta_p^tX(\mathbf{a})Z(\mathbf{b}):\ \mathbf{a},\mathbf{b}\in \F_q^n, t\in \F_p\}. $$ 
For any error $E=\zeta_p^tX(\mathbf{a})Z(\mathbf{b})\in G_n$,
we define the quantum weight of $E$ as 
$$wt_Q(E)= \sharp \{i:(a_i,b_i)\neq (0,0)\},$$ 
where $\sharp$ denotes the number of elements in the set.
A quantum code $\mathcal{Q}$ with dimension $K\geq 2$ is said to detect $d-1$ quantum errors ($d\geq 1$),
if for any pair $|\mathbf{u}\rangle$ and $|\mathbf{v}\rangle$ in $\mathcal{Q} $ with $\langle \mathbf{u}|\mathbf{v}\rangle=0$ 
and any $E\in G_n$ with $wt_Q(E)\leq d-1$, we have $\langle \mathbf{u}|E|\mathbf{v}\rangle=0$. For a $q$-ary quantum code of length $n$,
dimension $K$ and minimum distance $d$, we usually denote it by $((n,K,d))_q$ or $[[n,k,d]]_q$, where $k=\log_q K$. 

Let $S$ be an abelian subgroup of $G_n$. Then the quantum stabilizer codes $C(S)$ can be defined by  
$$C(S)=\{|\phi\rangle:\ E|\phi \rangle= |\phi\rangle, \forall E\in S \},$$ 
which are analogues of classical additive codes. As we mentioned before, from classical linear codes satisfying certain orthogonality, by 
the method (namely, CSS construction) introduced by Calderbank et al. \cite[]{RefJ17} and Steane \cite[]{RefJ18}, one can obtain quantum stabilizer codes. 
However, this method fails when $S$ is a non-abelian. By extending $S$ to be a new abelian subgroup in a larger error group and assuming that both sender 
and receiver shared the pre-existing entangled bits, Burn et al. \cite[]{RefJ19} introduced EAQECCs. In this case, EAQECCs can be derived from any classical 
linear codes.

\subsection{New EAQECCs and MDS EAQECCs of length $n>q+1$}
Based on the known method of constructing EAQECCs, we use MDS codes with Hermitian hulls of flexible dimensions obtained in Section \ref{sec-construction} to 
obtain new EAQECCs and MDS EAQECCs. Like classical linear codes, there exists a trade-off between the parameters $n$, $k$, $d$ and $c$ of an EAQECC, called quantum Singleton bound.
\begin{lemma}\label{lem2.5}(Quantum Singleton bound \cite{RefJ32})
  Let $\mathcal{Q}$ be an $[[n,k,d;c]]_q$ EAQECC. If $2d\leq n+2$, then
\[\begin{split}
  k\leq n+c-2(d-1).
\end{split} \]
\end{lemma}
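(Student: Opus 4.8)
The plan is to prove this by an information-theoretic (von Neumann entropy) argument, reducing the distance condition to a statement about erasure correction and then applying the decoupling characterization of quantum error correction. Write $d'=d-1$. Since a code that corrects $\lfloor (d-1)/2\rfloor$ errors corrects any $d'=d-1$ erasures, it suffices to exploit erasure-correctability. First I would fix a purification of the encoded state: let $R$ be a reference register of $k$ qudits maximally entangled with the $k$ logical qudits, let $N$ denote the $n$ transmitted (channel) qudits, and let $T$ denote the $c$ qudits of pre-shared maximal entanglement held by the receiver. The encoding isometry produces a global pure state $|\psi\rangle_{RNT}$ with $S(R)=k$ and $S(T)=c$ (both registers being maximally mixed), where $S(\cdot)$ denotes von Neumann entropy with logarithms taken to base $q$, so that each qudit contributes at most $1$.

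Next, using $2d\le n+2$, i.e. $n-2d'\ge 0$, I would split the channel qudits $N$ into three disjoint blocks $A_1,A_2,A_3$ with $|A_1|=|A_2|=d'$ and $|A_3|=n-2d'$. Correctness against erasure of the $d'$ qudits in $A_1$ means the receiver recovers $R$ from $A_2A_3T$; by the decoupling principle this is equivalent to $A_1$ being decoupled from the reference, $\rho_{RA_1}=\rho_R\otimes\rho_{A_1}$, hence $S(RA_1)=k+S(A_1)$, and symmetrically $S(RA_2)=k+S(A_2)$. Invoking purity of $|\psi\rangle_{RNT}$ gives $S(RA_1)=S(A_2A_3T)$ and $S(RA_2)=S(A_1A_3T)$, so that $S(A_2A_3T)=k+S(A_1)$ and $S(A_1A_3T)=k+S(A_2)$. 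Applying subadditivity, $S(A_2A_3T)\le S(A_2)+S(A_3)+S(T)$ and $S(A_1A_3T)\le S(A_1)+S(A_3)+S(T)$; adding the two resulting inequalities makes the $S(A_1),S(A_2)$ terms cancel, leaving $2k\le 2S(A_3)+2S(T)\le 2(n-2d')+2c$. Dividing by $2$ yields $k\le n+c-2(d-1)$, as desired.

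The main obstacle I anticipate is not the entropy algebra (which is routine once set up) but justifying the two structural inputs rigorously in the entanglement-assisted model: (i) that a minimum-distance-$d$ EAQECC corrects any $d-1$ erasures, and (ii) the decoupling equivalence ``recoverability of $R$ from a subsystem $Q$ $\Leftrightarrow$ $R$ is in a product state with the complement $Q^c$'', bookkeeping the pre-shared register $T$ on the receiver's side correctly so that the complement of the receiver's holdings is exactly the erased block $A_i$. Care is also needed to confirm $S(R)=k$ and $S(T)=c$ from the maximal-entanglement assumptions. Once these facts are in place, the bound follows from subadditivity of von Neumann entropy and the maximal-entropy estimate $S(A_3)\le |A_3|$; alternatively, since the statement attributes the result to \cite{RefJ32}, one could simply invoke the established Singleton-type bound for EAQECCs directly.
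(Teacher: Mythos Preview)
The paper does not prove this lemma at all: it is stated with a citation to \cite{RefJ32} and used as a black box. So there is no ``paper's own proof'' to compare against; the only question is whether your argument is sound.

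Your entropic argument is the standard route to Singleton-type bounds for quantum codes and is correct in outline. The setup (purification on $RNT$ with $S(R)=k$, $S(T)=c$), the split $N=A_1\cup A_2\cup A_3$ with $|A_1|=|A_2|=d-1$ (this is where $2d\le n+2$ is used), decoupling $S(RA_i)=k+S(A_i)$, purity $S(RA_1)=S(A_2A_3T)$, and then subadditivity together give exactly $k\le (n-2(d-1))+c$. The two structural inputs you flag are the right ones to justify carefully: that an EAQECC of minimum distance $d$ corrects any pattern of $d-1$ erasures on the channel register, and that in the entanglement-assisted model the receiver's side for recovery is $(N\setminus A_i)\cup T$, so the complement in the purification is precisely $A_i$ and decoupling reads $I(R{:}A_i)=0$. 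Both are standard (the first follows from the Knill--Laflamme conditions specialized to known error locations; the second from the EAQECC encoding isometry acting only on the sender's registers, leaving $T$ maximally mixed). One small point worth tightening: verify that the encoding is isometric from $k$ logical qudits plus the sender's $c$ ebit halves plus $n-k-c$ ancillas onto the $n$ channel qudits, so that the global state on $RNT$ is indeed pure; this is implicit in the stabilizer/EA formalism but should be stated. With those checks in place your proof is complete and more informative than the paper's bare citation.
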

\begin{remark}\label{rem4.EAQMDS}
  \begin{enumerate}
    \item [{\rm (1)}] An EAQECC for which equality holds in this bound, i.e., $2d\leq n+2$ and $k=n+c-2(d-1)$, is called an MDS EAQECC.
    \item [{\rm (2)}] It is well known that if a classical linear code $\C$ is MDS and $d\leq \frac{n+2}{2}$, then the EAQECC constructed by it is an MDS EAQECC.
  \end{enumerate}
\end{remark}

For a matrix $M=(m_{ij})$ over $\F_{q^2}$, we denote the conjugate transpose of $M$ by $M^\dag=(m^q_{ji})$. In practice, the explict method of constructing  
EAQECCs from a linear code with certain dimensional Hermitian hull was established by Galindo et al. in \cite{RefJ33}. We rephrase the important result 
in the following. 

\begin{lemma}\label{lem.EAQECCs Construction}(\cite{RefJ33})
Let $H$ be a parity check matrix of a $q^2$-ary $[n,k,d]$ linear code. Then there exists an $[[n,2k-n+c,d;c]]_q$ EAQECC $\mathcal{Q}$, 
where $c={\rm{rank}}(HH^\dag)$ is the required number of maximally entangled states.
\end{lemma}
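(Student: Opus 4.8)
The plan is to derive this from the entanglement-assisted stabilizer formalism introduced in \cite{RefJ19}, combined with the ``Hermitian'' (trace) correspondence that underlies the stabilizer/CSS construction of quantum codes over $\F_{q^2}$ \cite{RefJ17,RefJ18}; in effect I would reconstruct the argument of \cite{RefJ33}. First I would record the standard dictionary between $\F_{q^2}$-linear codes and elements of the error group $G_n$: identifying $\F_{q^2}\cong\F_q^2$ as $\F_q$-vector spaces, each vector $\mathbf{x}\in\F_{q^2}^n$ is sent to an operator $E(\mathbf{x})=X(\mathbf{a})Z(\mathbf{b})\in G_n$. Commutation in $G_n$ is governed by an $\F_q$-valued alternating form which, under this identification, is obtained from the Hermitian inner product via the trace map $tr_{q^2/q}$; the key consequence is that $E(\mathbf{x})$ and $E(\mathbf{y})$ commute whenever $\langle \mathbf{x},\mathbf{y}\rangle_H=0$, so Hermitian orthogonality over $\F_{q^2}$ matches the symplectic orthogonality that controls commutation in $G_n$.

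Next I would take the row space of $H$, namely the dual code $\C^{\bot_H}$ of dimension $n-k$, as the intended set of stabilizer generators. In the unassisted construction these generators must commute, i.e.\ one needs $\C^{\bot_H}\subseteq(\C^{\bot_H})^{\bot_H}=\C$, and the obstruction to this is measured exactly by the Hermitian Gram matrix $HH^\dag$. Since the rows of $H$ span $\C^{\bot_H}$, the matrix $HH^\dag$ is the Hermitian Gram matrix of $\C^{\bot_H}$, and the standard rank identity gives
\begin{align*}
  {\rm rank}(HH^\dag)=\dim(\C^{\bot_H})-\dim\big(\Hull_H(\C^{\bot_H})\big)=(n-k)-\dim\big(\Hull_H(\C)\big),
\end{align*}
using $\Hull_H(\C^{\bot_H})=\C^{\bot_H}\cap\C=\Hull_H(\C)$. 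Thus $c:={\rm rank}(HH^\dag)$ records precisely the ``defect'' from Hermitian self-orthogonality, i.e.\ the number of independent non-commuting pairs among the generators.

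Finally I would invoke the entanglement-assisted formalism of \cite{RefJ19}: a possibly non-abelian subgroup of $G_n$ can be embedded as an abelian subgroup of the error group of an enlarged system obtained by adjoining $c$ maximally entangled pairs, and the resulting stabilizer defines a genuine EAQECC $\mathcal{Q}$. Counting parameters, the $n-k$ independent rows of $H$ reduce the dimension as in the unassisted count, giving $n-2(n-k)=2k-n$ logical qubits, while each of the $c$ ebits restores one logical qubit, for a total of $2k-n+c$; the length is $n$ and the number of required entangled pairs is $c={\rm rank}(HH^\dag)$. For the distance I would check that every nontrivial undetectable error is represented by a coset lying in $\C=\ker H$, so that the minimum distance of $\mathcal{Q}$ is inherited from $\ker H$ and equals $d$. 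The main obstacle is making the two quantitative claims fully rigorous: the identity $c={\rm rank}(HH^\dag)$, which requires carefully matching the $\F_{q^2}$-rank of the Hermitian Gram matrix with the $\F_q$-symplectic rank appearing in the entanglement-assisted decomposition, and the verification that the construction preserves the classical distance $d$ exactly rather than merely bounding it from below.
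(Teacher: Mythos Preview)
The paper does not supply a proof of this lemma: it is quoted from \cite{RefJ33} and used as a black box, alongside Lemma~\ref{lem.c calculation} from \cite{RefJ4} for the hull interpretation of ${\rm rank}(HH^\dag)$. There is therefore nothing in the paper itself to compare your argument against.

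That said, your outline is a faithful reconstruction of the argument in \cite{RefJ33}. You have the parameter count right---the $n-k$ rows of $H$ over $\F_{q^2}$ become $2(n-k)$ symplectic generators over $\F_q$, whence $n-2(n-k)+c=2k-n+c$ logical qudits---and you have correctly flagged the one genuinely delicate step, namely matching the $\F_{q^2}$-rank of the Hermitian Gram matrix $HH^\dag$ with the number of ebits required in the $\F_q$-symplectic decomposition (this is precisely what \cite{RefJ33} works out). Your caution about the distance is also warranted: strictly speaking the construction yields minimum distance at least $d$, since undetectable errors lie in $\C\setminus\Hull_H(\C)$ and the minimum weight there can exceed $d(\C)$; the paper, like much of the literature, follows the convention of recording $d$ for this lower bound.
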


Generally speaking, the determination of the number of $c$ is difficult. Guenda et al. \cite{RefJ4} proposed a relationship between $c$ and $\dim(\Hull_H(\C))$ 
as follows, which simplifies the problem of calculating the number of $c$.

\begin{lemma}\label{lem.c calculation}(\cite{RefJ4})
 Let $\C$ be a $q^2$-ary $[n,k,d]$ linear code and $H$ be a parity check matrix of $\C$. Then 
\[\begin{split}
  rank(HH^\dag) = & n-k-\dim(\Hull_H(\C))\\
                = & n-k-\dim(\Hull_H(\C^{\bot_H}))\\
\end{split} \]
\end{lemma}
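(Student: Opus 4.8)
The plan is to read off $\operatorname{rank}(HH^\dag)$ from the dimension of the (left) null space of the $(n-k)\times(n-k)$ matrix $HH^\dag$ and then to identify that null space with a Hermitian hull. First I would fix the convention that $H$ is a full-row-rank $(n-k)\times n$ matrix whose rows $h_1,\dots,h_{n-k}$ form a basis of the Euclidean dual $\C^{\bot_E}$, so that $\operatorname{rank}(H)=n-k$. Since $HH^\dag$ is square of size $n-k$, the rank--nullity theorem gives $\operatorname{rank}(HH^\dag)=(n-k)-\dim N$, where $N=\{\lambda\in\F_{q^2}^{n-k}:\lambda HH^\dag=0\}$ is the left null space, and it suffices to compute $\dim N$.

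The key step is to turn $N$ into a hull. Because $H$ has full row rank, the map $\lambda\mapsto c:=\lambda H$ is a linear bijection of $\F_{q^2}^{n-k}$ onto the row space $\C^{\bot_E}$. Computing coordinates, the $j$-th entry of $cH^\dag$ equals $\sum_i c_i (H_{ji})^q=\langle c,h_j\rangle_H$, so $\lambda HH^\dag=cH^\dag=0$ exactly when $c$ is Hermitian-orthogonal to every $h_j$, i.e. when $c\in(\C^{\bot_E})^{\bot_H}$. As $c$ already lies in $\C^{\bot_E}$, this shows $\lambda\in N$ iff $c\in\C^{\bot_E}\cap(\C^{\bot_E})^{\bot_H}=\Hull_H(\C^{\bot_E})$. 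Hence $\dim N=\dim\Hull_H(\C^{\bot_E})$ and $\operatorname{rank}(HH^\dag)=(n-k)-\dim\Hull_H(\C^{\bot_E})$.

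It then remains to show that $\dim\Hull_H$ is unchanged when $\C$ is replaced by $\C^{\bot_E}$ or by $\C^{\bot_H}$. For $\C^{\bot_H}$ I would use that $\bot_H$ is an involution: $\Hull_H(\C^{\bot_H})=\C^{\bot_H}\cap(\C^{\bot_H})^{\bot_H}=\C^{\bot_H}\cap\C=\Hull_H(\C)$, which yields the second displayed equality outright. For $\C^{\bot_E}$ I would introduce the coordinatewise Frobenius $\varphi(x_1,\dots,x_n)=(x_1^q,\dots,x_n^q)$ and verify that it preserves Hermitian orthogonality, since $\langle\varphi(x),\varphi(y)\rangle_H=\langle y,x\rangle_H$, which vanishes exactly when $\langle x,y\rangle_H=0$; consequently $\varphi$ maps $\Hull_H(\D)$ bijectively onto $\Hull_H(\varphi(\D))$ for every code $\D$, so it is $\dim\Hull_H$-preserving. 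Combining this with the identity $\varphi(\C^{\bot_E})=(\varphi(\C))^{\bot_E}=\C^{\bot_H}$ (the last equality being a restatement of $\langle x,y\rangle_H=\langle x,\varphi(y)\rangle_E$) gives $\dim\Hull_H(\C^{\bot_E})=\dim\Hull_H(\C^{\bot_H})=\dim\Hull_H(\C)$, and the first equality follows.

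The coordinate computations are routine; the part needing the most care is the bookkeeping among the three operations $\bot_E$, $\bot_H$ and $\varphi$, in particular checking that $\varphi$ intertwines the Euclidean and Hermitian duals and preserves dimension, so that passing along $\C\to\C^{\bot_H}\to\C^{\bot_E}$ leaves $\dim\Hull_H$ invariant. I would also note at the outset that $\operatorname{rank}(HH^\dag)$ is independent of the chosen parity-check matrix $H$, which is automatic once $N$ has been identified with the intrinsic object $\Hull_H(\C^{\bot_E})$.
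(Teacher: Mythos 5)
Your proof is correct, but note that there is no in-paper argument to compare it against: the paper quotes this lemma verbatim from Guenda, Jitman and Gulliver \cite{RefJ4} without proof. Your argument is essentially the standard one behind that citation --- rank--nullity applied to the $(n-k)\times(n-k)$ matrix $HH^\dag$, followed by the identification of its left null space with a Hermitian hull through the bijection $\lambda\mapsto\lambda H$ --- and all of your coordinate computations check out. The place where your write-up is more careful than most accounts in the literature is the treatment of the parity-check convention: since the rows of $H$ span $\C^{\bot_E}$ rather than $\C^{\bot_H}$, the kernel identification initially produces $n-k-\dim(\Hull_H(\C^{\bot_E}))$, and your Frobenius intertwining step ($\varphi(\C^{\bot_E})=(\varphi(\C))^{\bot_E}=\C^{\bot_H}$ together with the fact that $\varphi$ preserves Hermitian orthogonality, hence Hermitian hull dimensions) is exactly what converts this into $n-k-\dim(\Hull_H(\C))$; had one instead defined $H$ as a generator matrix of $\C^{\bot_H}$, that step would be unnecessary, which is why both conventions yield the same rank formula. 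Two small points worth making explicit in a final version: the form $y\mapsto\langle x,y\rangle_H$ is only semilinear in $y$, which is precisely why vanishing of $\langle c,h_j\rangle_H$ on the basis rows $h_j$ already forces $c\in(\C^{\bot_E})^{\bot_H}$; and $\varphi$ is likewise semilinear rather than $\F_{q^2}$-linear, which is why it still carries $\F_{q^2}$-subspaces onto $\F_{q^2}$-subspaces of the same dimension.
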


Since the Hermitian dual code of an $[n,k,n-k+1]_{q^2}$ MDS code is an $[n,n-k,k+1]_{q^2}$ MDS code and $(\C^{\bot_H})^{\bot_H}=\C$, by the result $(2)$ 
of Remark \ref{rem4.EAQMDS} and Lemma \ref{lem.EAQECCs Construction}, we can obtain the following result immediately. 

\begin{lemma}\label{lem.EAQMDS Construction}
  Let $H$ be a parity check matrix of a $q^2$-ary $[n,k,d]$ linear code and $l=\dim(\Hull_H(\C))$. If $k\leq \lfloor \frac{n}{2} \rfloor$, then there exists 
  an $[[n,k-l,n-k+1;n-k-l]]_q$ EAQECC $\mathcal{Q}$ and an $[[n,n-k-l,k+1;k-l]]_q$ MDS EAQECC $\mathcal{Q'}$.
\end{lemma}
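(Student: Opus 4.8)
The plan is to assemble the statement of Lemma~\ref{lem.EAQMDS Construction} from three ingredients already established in the excerpt: the EAQECC construction of Lemma~\ref{lem.EAQECCs Construction}, the hull-based formula for the entanglement parameter $c$ in Lemma~\ref{lem.c calculation}, and the MDS-EAQECC criterion recorded in part~(2) of Remark~\ref{rem4.EAQMDS}. Since every piece is quoted verbatim earlier, the proof is essentially a bookkeeping exercise of substituting the correct parameters into each result, and no genuinely new estimate is needed.

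First I would handle the code $\mathcal{Q}$. Starting from the $q^2$-ary $[n,k,d]$ linear code $\C$ with parity check matrix $H$, Lemma~\ref{lem.c calculation} gives $c = \mathrm{rank}(HH^\dag) = n-k-\dim(\Hull_H(\C)) = n-k-l$. Feeding this value of $c$ into Lemma~\ref{lem.EAQECCs Construction} yields an $[[n,\,2k-n+c,\,d;\,c]]_q$ code; substituting $c=n-k-l$ simplifies the dimension to $2k-n+(n-k-l)=k-l$, so $\mathcal{Q}$ has parameters $[[n,\,k-l,\,d;\,n-k-l]]_q$. Because $\C$ is MDS we have $d=n-k+1$, giving exactly the stated $[[n,\,k-l,\,n-k+1;\,n-k-l]]_q$ EAQECC.

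Next I would treat the MDS code $\mathcal{Q}'$, applying the same machinery to the Hermitian dual $\C^{\bot_H}$ in place of $\C$. As noted in the sentence preceding the lemma, $\C^{\bot_H}$ is an $[n,\,n-k,\,k+1]_{q^2}$ MDS code, and $\dim(\Hull_H(\C^{\bot_H}))=\dim(\Hull_H(\C))=l$ by the second equality in Lemma~\ref{lem.c calculation}. Running Lemmas~\ref{lem.c calculation} and~\ref{lem.EAQECCs Construction} with parameters $(n,\,n-k,\,k+1)$ produces an $[[n,\,(n-k)-l,\,k+1;\,k-l]]_q$ code, i.e.\ the claimed $\mathcal{Q}'$. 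To certify that $\mathcal{Q}'$ is genuinely MDS rather than merely an EAQECC, I would invoke Remark~\ref{rem4.EAQMDS}(2): it suffices that the underlying classical code $\C^{\bot_H}$ is MDS (already known) and that its minimum distance satisfies $d'=k+1\leq \frac{n+2}{2}$. This last inequality is where the hypothesis $k\leq \lfloor \frac{n}{2}\rfloor$ enters, since $k\leq \lfloor\frac{n}{2}\rfloor$ gives $k+1\leq \frac{n}{2}+1=\frac{n+2}{2}$.

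The main obstacle, such as it is, is not analytic but notational: one must be careful to track which code ($\C$ or $\C^{\bot_H}$) each lemma is being applied to, and to verify that the distance condition of Remark~\ref{rem4.EAQMDS}(2) is met for the MDS claim on $\mathcal{Q}'$ rather than for $\mathcal{Q}$. The hypothesis $k\leq \lfloor\frac{n}{2}\rfloor$ should be checked to see that it is exactly what guarantees $d'=k+1\leq \frac{n+2}{2}$, and one should confirm that no analogous distance bound is silently required for $\mathcal{Q}$ (indeed $\mathcal{Q}$ is stated only as an EAQECC, not an MDS EAQECC, so no such condition is needed there). Once the parameter substitutions and this single inequality are verified, the proof is complete.
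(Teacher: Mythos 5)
Your proposal is correct and follows exactly the paper's intended argument: the paper states this lemma as an immediate consequence of Lemma~\ref{lem.EAQECCs Construction}, the hull formula of Lemma~\ref{lem.c calculation}, Remark~\ref{rem4.EAQMDS}(2), and the fact that the Hermitian dual of an $[n,k,n-k+1]_{q^2}$ MDS code is an $[n,n-k,k+1]_{q^2}$ MDS code. Your parameter bookkeeping for both $\mathcal{Q}$ and $\mathcal{Q}'$, and your identification of $k\leq\lfloor\frac{n}{2}\rfloor$ as precisely what guarantees $k+1\leq\frac{n+2}{2}$ for the MDS claim on $\mathcal{Q}'$, match the paper's (omitted) reasoning.
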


Now, according to Lemma \ref{lem.EAQMDS Construction}, we can present our new constructions of $q$-ary EAQECCs and MDS EAQECCs.

\begin{theorem}\label{th.EAQECC1}
  Let $q=p^m\geq 3$ be a prime power. The following statements hold.
  \begin{enumerate}
    \item [{\rm (1)}] For $0\leq k\leq q-1$, if $q^2-k\leq n\leq q^2$, then there exists an $[[n,k-l,n-k+1;n-k-l]]_q$ EAQECC $\mathcal{Q}$ 
    and an $[[n,n-k-l,k+1;k-l]]_q$ MDS EAQECC $\mathcal{Q'}$, where $0\leq l\leq n+k-q^2$.
    \item [{\rm (2)}] For $0\leq k\leq q-1$, if $q^2-k+1\leq n\leq q^2$, then there exists an $[[n+1,k-l,n-k+2;n+1-k-l]]_q$ EAQECC $\mathcal{Q}$ 
    and an $[[n+1,n+1-k-l,k+1;k-l]]_q$ MDS EAQECC $\mathcal{Q'}$, where $0\leq l\leq n+k-q^2-1$.
    \item [{\rm (3)}] If $q^2-q \leq n\leq q^2$, then there exists an $[[n+1,q-l,n-q+2;n+1-q-l]]_q$ EAQECC $\mathcal{Q}$ 
    and an $[[n+1,n+1-q-l,q+1;q-l]]_q$ MDS EAQECC $\mathcal{Q'}$, where $0\leq l\leq n+q-q^2$.
  \end{enumerate}
\end{theorem}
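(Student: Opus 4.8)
The plan is to reduce Theorem \ref{th.EAQECC1} directly to the three MDS codes constructed in Section \ref{sec-construction} together with the black-box machinery in Lemma \ref{lem.EAQMDS Construction}. Each of the three parts corresponds to exactly one of Theorems \ref{th.ConA.1}, \ref{th.ConA.2}, \ref{th.ConA.3}: part (1) uses the $[n,k]_{q^2}$ code of Theorem \ref{th.ConA.1}, part (2) uses the $[n+1,k]_{q^2}$ code of Theorem \ref{th.ConA.2}, and part (3) uses the $[n+1,q]_{q^2}$ code of Theorem \ref{th.ConA.3}. For each part I would first invoke the relevant theorem to produce a $q^2$-ary MDS code $\C$ of the stated length with a Hermitian hull of dimension exactly $l$, for every $l$ in the admissible range $0\le l\le n+k-q^2$ (resp. $n+k-q^2-1$, resp. $n+q-q^2$). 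Since the code is MDS, its minimum distance equals $n-k+1$ (resp. $n-k+2$, resp. $n-q+2$), so the length, dimension, and distance are all pinned down.

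Next I would verify the single hypothesis needed to apply Lemma \ref{lem.EAQMDS Construction}, namely $k\le\lfloor n/2\rfloor$. For part (1), Remark \ref{rem1.conA.comparion} already shows $n\ge q^2-q+1$ while $k\le q-1$, so $k\le q-1\le \lfloor n/2\rfloor$ holds comfortably once $q\ge 3$; the analogous estimates for parts (2) and (3) follow the same way from $n\ge q^2-q+2$ and $n\ge q^2-q$ respectively, so I would simply remark that $k$ (or $q$) is far smaller than $n/2$ in all three cases. With that dimension constraint confirmed, Lemma \ref{lem.EAQMDS Construction} applies verbatim: taking a parity check matrix $H$ of $\C$ and setting $l=\dim(\Hull_H(\C))$, it yields both an $[[n,k-l,n-k+1;n-k-l]]_q$ EAQECC and an $[[n,n-k-l,k+1;k-l]]_q$ MDS EAQECC. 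Substituting the concrete $(n,k,d)$ from each of the three theorems gives precisely the three families listed in the statement.

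There is a small bookkeeping point I would flag explicitly: the theorem statement allows $k=0$ in parts (1) and (2), whereas the underlying constructions require $1\le k\le q-1$. For $k=0$ the claimed MDS EAQECC is an $[[n,n-l,1;-l]]_q$ object, which is degenerate; I would either note that the interesting range is $1\le k\le q-1$ (matching the hull constructions) or dispose of the $k=0$ corner case separately, since it contributes nothing new. Apart from that, the entire argument is a substitution, so the only genuine content lives in the earlier theorems.

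The main obstacle is therefore not in this proof at all but upstream: the correctness of Theorems \ref{th.ConA.1}--\ref{th.ConA.3}, i.e.\ that the Hermitian hull can genuinely be tuned to \emph{every} value $l$ in the stated interval, and that Lemma \ref{lem.c calculation}'s identity $\mathrm{rank}(HH^\dag)=n-k-\dim(\Hull_H(\C))$ correctly produces $c=n-k-l$. Once those are granted, the remaining risk is purely arithmetic: I would double-check that the distance assignments $d=k+1$ for the MDS code $\mathcal{Q}'$ (obtained from the dual $[n,n-k,k+1]_{q^2}$ MDS code, using $(\C^{\bot_H})^{\bot_H}=\C$ and $\dim(\Hull_H(\C))=\dim(\Hull_H(\C^{\bot_H}))$) satisfy the MDS EAQECC condition $2d\le n+2$ from Remark \ref{rem4.EAQMDS}, so that $\mathcal{Q}'$ is legitimately MDS and not merely an EAQECC. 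Given $k\le q-1$ and $n\ge q^2-q+1$, this inequality $2(k+1)\le n+2$ reduces to $k\le n/2$, which is exactly the hypothesis already checked, so the MDS label is justified.
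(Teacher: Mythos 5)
Your proposal is correct and follows exactly the paper's (implicit) proof: the paper presents Theorem \ref{th.EAQECC1} as an immediate consequence of applying Lemma \ref{lem.EAQMDS Construction} to the MDS codes with prescribed Hermitian hull dimensions from Theorems \ref{th.ConA.1}, \ref{th.ConA.2} and \ref{th.ConA.3}, which is precisely your reduction. Your additional checks --- the hypothesis $k\le\lfloor n/2\rfloor$, the MDS condition $2d\le n+2$, and the degenerate $k=0$ corner (the underlying constructions indeed require $1\le k\le q-1$, so the ``$0\le k$'' in the statement is a harmless slip) --- are sound and in fact more careful than the paper's one-line derivation.
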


\begin{example}\label{exam.EAQECC1}
The results {\rm{(1),\ (2)\ and\ (3)}} of Theorem \ref{th.EAQECC1} can be used to obtain many EAQECCs and MDS EAQECCs. 
According to Remark \ref{rem1.conA.comparion}, the length of these new MDS EAQECCs are always greater than $q+1$ and do 
not completely cover each other. As an intuitive example, we list some new MDS EAQECCs in Table \ref{tab:EAQECCs1}. 
\end{example}
\begin{table}[!htb]
  \caption{Some EAQMDS codes construted by Theorem \ref{th.EAQECC1} over $\F_9$}
  \label{tab:EAQECCs1}       
  \begin{center}
    \begin{tabular}{cccc|cccc}
      \hline
       $k$ & $l$ & EAQMDS codes & Ref. & $k$ & $l$ & EAQMDS codes & Ref. \\
      \hline
      8 & 3 & $[[76,65,9;5]]_9$ & Theorem \ref{th.EAQECC1}(1) & 8 & 3 & $[[77,66,9;5]]_9$ & Theorem \ref{th.EAQECC1}(1)\\
      8 & 3 & $[[78,67,9;5]]_9$ & Theorem \ref{th.EAQECC1}(1) & 8 & 3 & $[[79,68,9;5]]_9$ & Theorem \ref{th.EAQECC1}(1)\\
      8 & 3 & $[[80,69,9;5]]_9$ & Theorem \ref{th.EAQECC1}(1) & 8 & 5 & $[[78,65,9;3]]_9$ & Theorem \ref{th.EAQECC1}(1)\\
      8 & 5 & $[[79,66,9;3]]_9$ & Theorem \ref{th.EAQECC1}(1) & 8 & 5 & $[[80,67,9;3]]_9$ & Theorem \ref{th.EAQECC1}(1)\\

      8 & 3 & $[[81,70,9;5]]_9$ & Theorem \ref{th.EAQECC1}(2) & 8 & 5 & $[[81,68,9;3]]_9$ & Theorem \ref{th.EAQECC1}(2)\\

      9 & 3 & $[[76,64,10;6]]_9$ & Theorem \ref{th.EAQECC1}(3) & 9 & 3 & $[[77,65,10;6]]_9$ & Theorem \ref{th.EAQECC1}(3)\\
      9 & 3 & $[[78,66,10;6]]_9$ & Theorem \ref{th.EAQECC1}(3) & 9 & 3 & $[[79,67,10;6]]_9$ & Theorem \ref{th.EAQECC1}(3)\\
      9 & 3 & $[[80,68,10;6]]_9$ & Theorem \ref{th.EAQECC1}(3) & 9 & 3 & $[[79,65,10;4]]_9$ & Theorem \ref{th.EAQECC1}(3)\\
      9 & 5 & $[[78,64,10;4]]_9$ & Theorem \ref{th.EAQECC1}(3) & 9 & 5 & $[[79,65,10;4]]_9$ & Theorem \ref{th.EAQECC1}(3)\\
      9 & 5 & $[[80,66,10;4]]_9$ & Theorem \ref{th.EAQECC1}(3) & 9 & 5 & $[[81,67,10;4]]_9$ & Theorem \ref{th.EAQECC1}(3)\\
      \hline
    \end{tabular}
  \end{center}
  \end{table}

\begin{theorem}\label{th.EAQECC2}
  Let $q=p^m\geq 3$ be a prime power and $n=m(q-1)$, where $2\leq m\leq q$. Then for any $1\leq k\leq m-1$, there exists an 
  $[[n+1,k-l,n-k+2;n+1-k-l]]_q$ EAQECC $\mathcal{Q}$ and an $[[n+1,n+1-k-l,k+1;k-l]]_q$ MDS EAQECC $\mathcal{Q'}$, where $0\leq l\leq k-1$.
\end{theorem}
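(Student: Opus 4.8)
The plan is to obtain Theorem \ref{th.EAQECC2} as a direct application of the EAQECC construction machinery to the family of classical MDS codes with flexible Hermitian hull dimensions already produced in Corollary \ref{coro.1.flexible Hermitian hulls from 1.Hermitian almost self-orthogonal}. Since that corollary hands us exactly the right classical ingredient and Lemma \ref{lem.EAQMDS Construction} converts such a code into a pair of (MDS) EAQECCs, the work is essentially to match parameters and to check the single numerical hypothesis of the lemma.

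First I would invoke Corollary \ref{coro.1.flexible Hermitian hulls from 1.Hermitian almost self-orthogonal}, which itself rests on Theorem \ref{th.Hermitian almost self-orthogonal.ConA.}: for $2\leq m\leq q$, every $1\leq k\leq m-1$, and every $0\leq l\leq k-1$, there exists an $[m(q-1)+1,k]_{q^2}$ MDS code $\C$ whose Hermitian hull has dimension exactly $l$. Writing $n=m(q-1)$, the code $\C$ has length $n+1$, dimension $k$, and, being MDS, minimum distance $(n+1)-k+1=n-k+2$; moreover $\dim(\Hull_H(\C))=l$.

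Next I would verify the hypothesis $k\leq \lfloor \frac{n+1}{2}\rfloor$ required to apply Lemma \ref{lem.EAQMDS Construction} (whose length parameter is here played by $n+1$). Because $q\geq 3$ forces $q-1\geq 2$, we get $n+1=m(q-1)+1\geq 2m+1$, so $\lfloor \frac{n+1}{2}\rfloor\geq m$; since $k\leq m-1<m$, the inequality $k\leq \lfloor \frac{n+1}{2}\rfloor$ holds with room to spare. With this verified, Lemma \ref{lem.EAQMDS Construction} applied to $\C$ with hull dimension $l$ immediately yields an $[[n+1,\,k-l,\,n-k+2;\,n+1-k-l]]_q$ EAQECC $\mathcal{Q}$ and an $[[n+1,\,n+1-k-l,\,k+1;\,k-l]]_q$ MDS EAQECC $\mathcal{Q'}$, which are exactly the asserted parameters.

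There is no genuine obstacle in this argument: the real content lives in Theorem \ref{th.Hermitian almost self-orthogonal.ConA.} and Corollary \ref{coro.1.flexible Hermitian hulls from 1.Hermitian almost self-orthogonal}, which supply the classical codes, and in Lemma \ref{lem.EAQMDS Construction}, which performs the quantum construction. The only steps demanding any care are the bookkeeping of the Singleton-type parameters under the substitution $n\mapsto n+1$ and the verification that $k\leq \lfloor \frac{n+1}{2}\rfloor$; the latter is where the hypothesis $q\geq 3$ (equivalently, length $n+1>q+1$, cf. Remark \ref{rem.conB.length>q+1}) enters.
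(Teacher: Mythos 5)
Your proposal is correct and follows exactly the route the paper intends: Theorem \ref{th.EAQECC2} is presented there as an immediate consequence of applying Lemma \ref{lem.EAQMDS Construction} to the $[m(q-1)+1,k]_{q^2}$ MDS codes with $l$-dimensional Hermitian hull supplied by Corollary \ref{coro.1.flexible Hermitian hulls from 1.Hermitian almost self-orthogonal}. Your explicit check that $k\leq m-1<m\leq \lfloor \frac{n+1}{2}\rfloor$ (using $q\geq 3$) is precisely the bookkeeping the paper leaves implicit, so nothing is missing.
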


\begin{example}\label{exam.EAQECC2}
  According to Remark \ref{rem1.comparing Hermitian self-orthogonal}, the EAQECCs and MDS EAQECCs of length $n> q+1$ constructed by Theorem \ref{th.EAQECC2} are new. 
  According to Remark \ref{rem.conB.length>q+1}, most of these EAQECCs and MDS EAQECCs can not be obtained by Theorem \ref{th.EAQECC1}.  
  We list some of them in Table \ref{tab:EAQECCs2}.
\end{example}
\begin{table}[!htb]
  \caption{Some EAQMDS codes construted by Theorem \ref{th.EAQECC2} for some $q$}
  \label{tab:EAQECCs2}       
  \begin{center}
    \begin{tabular}{ccccc|ccccc}
      \hline
       $q$ & $m$ & $k$ & $l$ & EAQMDS codes & $q$ & $m$ & $k$ & $l$ & EAQMDS codes \\
      \hline
      8 & 5 & 4 & 2 & $[[36,30,5;2]]_8$ & 8 & 6 & 4 & 2 & $[[43,37,5;2]]_8$\\
      8 & 7 & 6 & 4 & $[[50,40,7;2]]_8$ & 8 & 8 & 6 & 4 & $[[57,47,7;2]]_8$\\

      9 & 5 & 4 & 2 & $[[41,35,5;2]]_9$ & 9 & 7 & 4 & 2 & $[[57,51,5;2]]_9$\\
      9 & 9 & 8 & 3 & $[[73,62,9;5]]_9$ & 9 & 9 & 8 & 5 & $[[73,60,9;3]]_9$\\

      16 & 7 & 6 & 2 & $[[106,98,7;4]]_{16}$ & 16 & 8 & 6 & 2 & $[[121,113,7;4]]_{16}$\\
      16 & 13 & 6 & 2 & $[[196,188,7;4]]_{16}$ & 16 & 14 & 6 & 2 & $[[211,203,7;4]]_{16}$\\

      25 & 14 & 4 & 2 & $[[337,331,5;2]]_{25}$ & 25 & 18 & 4 & 2 & $[[433,427,5;2]]_{25}$\\
      25 & 15 & 4 & 2 & $[[361,355,5;2]]_{25}$ & 25 & 19 & 4 & 2 & $[[457,451,5;2]]_{25}$\\
      25 & 16 & 6 & 4 & $[[385,375,7;2]]_{25}$ & 25 & 20 & 6 & 4 & $[[481,471,7;2]]_{25}$\\
      25 & 17 & 6 & 4 & $[[409,399,7;2]]_{25}$ & 25 & 21 & 6 & 4 & $[[505,495,7;2]]_{25}$\\
      \hline
    \end{tabular}
  \end{center}
  \end{table}

\section{Summary and concluding remarks}\label{sec-conclusion}
The main contribution of this paper is to construct several new classes of MDS codes and totally determine their 
Euclidean hulls (See Theorems \ref{th.ConA_form of Euclidean} and \ref{th.ConC}) or 
Hermitian hulls (See Theorems \ref{th.ConA.1}, \ref{th.ConA.2}, \ref{th.ConA.3} and Corollary \ref{coro.1.flexible Hermitian hulls from 1.Hermitian almost self-orthogonal}). 
For Hermitian cases, four new classes of $q$-ary EAQECCs and four new classes of $q$-ary MDS EAQECCs of length $n>q+1$ are further obtained 
(See Theorems \ref{th.EAQECC1} and \ref{th.EAQECC2}).  
And for Euclidean cases, some new Euclidean self-orthogonal and one-dimensional Euclidean hull MDS codes are given as examples. 

In particular, for convenience, MDS codes with $(k-1)$-dimensional Hermitian hull are called Hermitian almost self-orthogonal MDS codes in this paper. 
Some new criterions for extended GRS codes being Hermitian almost self-orthogonal MDS codes and Hermitian self-orthogonal MDS codes are presented 
(See Lemma \ref{lem.Hermitian almost self-orthogonal EGRS} and Corollary \ref{coro.Hermitian self-orthogonal EGRS}). For future research, it would be 
interesting to construct more Hermitian (almost) self-orthogonal MDS codes and MDS EAQECCs of length $n>q+1$.

\section*{Acknowledgments}
This research was supported by the National Natural Science Foundation of China (Nos.U21A20428 and 12171134).
\section*{}

\end{document}